\pgfplotsset{compat=1.16}
\newtheorem{theorem}{\bf Theorem}
\newtheorem{remark}{\bf Remark}
\newtheorem{proposition}{\bf Proposition}
\newtheorem{lemma}{\bf Lemma}
\newtheorem{assumption}{\bf Assumption}
\def\QED{~\rule[-1pt]{5pt}{5pt}\par\medskip}
\DeclareMathOperator*{\argmin}{arg\,min}
\title{\LARGE \bf
Point-Based Value Iteration and Approximately Optimal \\ Dynamic Sensor Selection for Linear-Gaussian Processes* 
}
\author{Michael Hibbard$^{1}$ \and Kirsten Tuggle$^{2}$ \and Takashi Tanaka$^{1}$
\thanks{*This work is supported by DARPA Grant D19AP00078,  FOA-AFRL-AFOSR-2019-0003 and NSF Award 1944318.}
\thanks{$^{1}$M. Hibbard and T. Tanaka are with the Department of Aerospace Engineering and Engineering Mechanics, University of Texas at Austin, TX 78712, USA.
        {\tt\small \{mwhibbard,ttanaka\}@utexas.edu}}%
\thanks{$^{2}$Kirsten Tuggle recently graduated from the Department of Aerospace Engineering and Engineering Mechanics, University of Texas at Austin, TX 78712, USA.,
        {\tt\small ktuggle@utexas.edu}}%
}
\begin{document}

\maketitle
\thispagestyle{empty}
\pagestyle{empty}

\begin{abstract}

The problem of synthesizing an optimal sensor selection policy is pertinent to a variety of engineering applications ranging from event detection to autonomous navigation. We consider such a synthesis problem over an infinite time horizon with a discounted cost criterion.
We formulate this problem in terms of a value iteration over the continuous space of covariance matrices. To obtain a computationally tractable solution, we subsequently formulate an approximate sensor selection problem, which is solvable through a point-based value iteration over a finite ``mesh" of covariance matrices with a user-defined bounded trace. We provide theoretical guarantees bounding the suboptimality of the sensor selection policies synthesized through this method and provide numerical examples comparing them to known results.
\end{abstract}

\begin{keywords}
    Sensor networks, Kalman filtering, Estimation
\end{keywords}

\section{Introduction}

We study the problem of strategic sensor selection for linear, time-invariant, Gaussian systems with a discounted cost over an infinite time horizon. The need to strategically allocate sensor resources naturally arises in numerous engineering applications, such as the situation where a certain sensor measurement is prohibitively expensive, or where the computational capability of an agent restricts the number of sensors that can be used simultaneously. 
Related to such situations is the field of active sensing, particularly in regard to the control of multiple sensing agents \cite{freundlich2015optimal},\cite{atanasov2013nonmyopic},\cite{kantaros2019asymptotically}, wherein the control of each agent must be determined in order to obtain the desired sensor measurements. 
%
In general, the question of optimal state estimation for linear systems subject to constraints on the total number of allowable sensor activations is well studied. The problem has been shown to be NP-hard for different mean square error (MSE)-based objective functions in finite \cite{williamson2011design} and infinite \cite{zhang2017sensor} time horizons. 

Many recent works have addressed this problem through formulations based on semidefinite programming relaxations.
In \cite{joshi2008sensor}, it was shown that the sensor selection problem for static systems (or dynamic systems without process noise) over finite time horizons is a mixed integer semidefinite programming problem (MISDP). The same classification was subsequently made for dynamic systems with process noise in \cite{mo2011sensor}. As a result, many methods exist that utilize the SDP nature of the relaxed problem, which is obtained by relaxing the Boolean selection variables to the continuum $[0,1]$. In \cite{joshi2008sensor}, rounding, together with a swapping and ordering scheme, was applied to select a subset of sensors satisfying the sensing constraint. In both \cite{mo2011sensor} and \cite{roy2016spatio}, the relaxed SDP was solved with an iterative, re-weighted $l$$-$$1$ constraint on the vector containing all scheduling variables to induce sparsity.


Recently, alternative approaches to the sensor selection problem have focused on leveraging the properties of submodular optimization to obtain computationally tractable solutions. Since it has been shown that the MSE is not submodular in general \cite{jawaid2015submodularity}, many works \cite{shamaiah2010greedy}, \cite{tzoumas2016sensor} have instead considered surrogate objectives with proven submodularity. However, performance guarantees are only valid with respect to the surrogate objective. Other approaches maintain the original objective but prove performance via characterizing weak submodularity \cite{hashemi2020randomized}, \cite{chamon2020approximate} for certain special cases of linear-Gaussian systems over finite time horizons. Similarly, a few works \cite{jawaid2015submodularity}, \cite{singh2017supermodular} provide special cases of linear-Gaussian systems for which the MSE is submodular, and provide sufficient conditions for which these cases can be established. Several other noteworthy methods include pruning-based approaches \cite{vitus2012efficient} as well as stochastic sensor selection \cite{gupta2006stochastic}.

%
We develop an alternative approach based on standard point-based value iteration (PBVI) for partially observable Markov decision processes (POMDPs) \cite{pineau2003point}. In PBVI, a representative set of belief points is selected, and a value iteration procedure is applied exclusively to these sample points. By considering only this representative sample, PBVI is able to avoid the curse of dimensionality typically associated with planning in POMDPs. 
Using this approach, we impose a cumulative objective that optimizes performance across both the transient and the steady-state phases of the system, whereas many existing works \cite{gupta2006stochastic}, \cite{zhang2017sensor} consider applications where the estimation performance is relevant only ``at infinity" and exclusively optimize the steady-state covariance.  %
%
Several existing works \cite{vaisenberg2014scheduling}, \cite{atia2011sensor}, \cite{ghasemi2018perception} formulate scheduling or activations as a POMDP and apply some version of PBVI; however, these works typically provide suboptimality bounds only for discrete state spaces.

Based on the standard PBVI procedure, we propose a novel approximate dynamic programming algorithm for the sensor selection problem wherein we represent the belief space as the space of positive semidefinite covariance matrices. To obtain our representative sample of belief points, we discretize the space of bounded covariance matrices through a uniform ``mesh-grid" procedure with a user-defined resolution. Considering the value iteration procedure exclusively over this finite set, we quantify the upper bound on the suboptimality of the resulting sensor selection policy in terms of the chosen resolution of the discretization procedure. We additionally show that this solution method is massively parallelizable, and provide numerical experiments showing that it performs favorably to existing methods. 

\subsection*{Notation}
We denote the set of all real numbers by $\mathbb{R}$, the set of all integers by $\mathbb{Z}$, and the set of all $n$$\times$$n$ integer-valued matrices by $\mathbb{Z}^{n\times n}$. Furthermore, we denote the set of all positive semidefinite matrices of size $n$$\times$$n$ by $\mathbb{S}_+^n$. For a matrix $A$$\in$$\mathbb{R}^{n \times n}$, we denote its trace by $\text{Tr}(A)$$=$$\sum_{i=1}^{n}a_{ii}$.

\section{Problem formulation}

We consider the discrete-time system with linear dynamics
\begin{equation}
\label{eq:state1}
x_{t+1}=Ax_t+w_t, \;\; w_t\sim\mathcal{N}(0,W), \;\; t=0, 1, \cdots,
\end{equation}
where $x_t$$\in$$\mathbb{R}^n$. We assume that the initial state distribution $x_0$$\sim$$\mathcal{N}(0, P_0)$ is known a priori and that the matrix $A$$\in$$\mathbb{R}^{n\times n}$ is Schur stable; i.e., each of its eigenvalues $\lambda_{i}$ satisfies $|\lambda_i(A)|$$<$$1$ for $i$$=$$ 1,\ldots,n$. The Schur stability of $A$ will be necessary for the existence of a sensor selection policy for our problem, 
as well as for bounding the suboptimality of this policy. We further assume that, at each time step, a set of $m$ sensors is available to observe the underlying state of the system. The linear sensor measurements follow 
\begin{equation}
\label{eq:state2}
y_t=Cx_t+v_t, \;\; v_t\sim\mathcal{N}(0,V), \;\; t=1, 2, \ldots,
\end{equation}
where the matrix $C$$\in$$\mathbb{R}^{n\times m}$. We interpret the $i$-th block row of \eqref{eq:state2} as the output of the $i$-th sensor. Due to the costs associated with operating every sensor at each time step, it may be advantageous to select a subset of these sensors to observe the underlying state. If a subset of sensors $S_t$$\subseteq$$\{1,\cdots, m\}$ is used at time step $t$$+$$1$, then 
\eqref{eq:state2} becomes
\begin{equation} \label{eq:state3}
y_{t+1}=C_{S_t}x_{t+1}+v_{t+1}, \; v_{t+1}\sim\mathcal{N}(0,V_{S_t}), \; t=0, 1, \cdots
\end{equation}
%
where $C_{S_t}$ and $V_{S_t}$ are submatrices of $C$ and $V$, respectively, formed by selecting the rows of $C$ and the rows and columns of $V$ corresponding to the index set $S_t$. Let $\{S_t\}_{t=0,1,\ldots}$ be the sequence of subsets of selected sensors. 
By the Kalman filter formula, the estimation error covariance
\begin{equation}
P_{t}=\mathbb{E}[(x_t-\hat{x}_t)(x_t-\hat{x}_t)^\top|y_1,\ldots,y_t],
\end{equation}
with $\hat{x}_t=\mathbb{E}[x_t|y_1, \cdots, y_t]$, is a function of the set $S_{t}$, and is recursively computed according to $P_{t+1}$$=$$f(P_t, S_t)$, where
\begin{equation}\label{eq:state_p}
f(P_t, S_t):=\left((AP_tA^\top+W)^{-1}+C_{S_t}^\top V_{S_t}^{-1}C_{S_t}\right)^{-1}
\end{equation}
for $t$$=$$0, 1, \ldots$. 
Now, let $\pi$$:$$\mathbb{S}_+^n$$\rightarrow$$\{0, 1, \ldots, m\}$, $S_t$$=$$\pi(P_t)$ be a time-invariant sensor selection policy, and let $\Pi$ denote the space of all such (Borel measurable) policies. We formulate our sensor selection problem as 
\begin{equation}\label{prob:1}
\min_{\pi\in\Pi} \quad \sum\nolimits_{t=0}^\infty \beta^t c(P_t, S_t),
\end{equation}
where $c(P_t, S_t)$ is a cost function and $0$$\leq$$\beta$$<$$1$ is a discount factor weighting the relative importance of present and future costs.
Our choice of incorporating a discount factor will be critical in concisely obtaining the suboptimality bounds for our method as alternative undiscounted, average-per-time-step cost characterizations often require further restrictions 
\cite{bertsekas2011dynamic}.
Due to the time-invariance of both the state dynamics in \eqref{eq:state_p} and the discounted cost, the assumption that the optimal policy is time-invariant is made without loss of generality (see, e.g., Section 3.7 of \cite{bertsekas2011dynamic}). We assume that $c(P_t, S_t)$ follows\footnote{More general cost functions $c(P_t, S_t)$$=$$Tr(\Phi P_t)$$+$$g(S_t)$, where $\Phi$ is a positive definite weight matrix, can be incorporated in the following analysis with minor modifications.} 
\[
c(P_t, S_t)=\text{Tr}(P_t) + g(S_t),
\]
where $\text{Tr}(P_t)$ measures the spread of the covariance matrix $P_t$, $g$$:$$2^{\{1, 2, \cdots, m\}}$$\rightarrow$$ [0,\infty)$ is a set function that maps subsets of active sensors to an associated cost.
Some set functions of interest include $g(S_t)$$=$$|S_t|$, which penalizes the number of sensors used simultaneously, and
\begin{equation}
\label{eq:exactly_one}
g(S_t)=\begin{cases}
0 & \text{ if $|S_t|=1$} \\
+\infty & \text{ otherwise} 
\end{cases}
\end{equation}
which stipulates the use of one sensor at each time step.

For computational tractability, we adopt a slight modification to the problem \eqref{prob:1}. First, define the set of all positive semidefinite matrices in $\mathbb{R}^{n\times n}$ with trace at most $\gamma$ as $\mathbb{S}_+^n(\gamma)$; i.e., $\mathbb{S}_+^n(\gamma)$$:=$$\{P$$\in$$\mathbb{S}_+^n$$:$$ \text{Tr}(P)$$\leq$$\gamma\}$. Now, let $\Pi(\gamma)$ be the set of stationary sensor selection policies $S_t$$=$$\pi(P_t)$ for which the set $\mathbb{S}_+^n(\gamma)$ is invariant; i.e., over repeated sensor measurement updates, the covariance matrix remains in the set $\mathbb{S}_+^n(\gamma)$.
\begin{assumption} \label{asmp1}
The set $\Pi(\gamma)$ of policies $\pi$ under which $\mathbb{S}_+^{n}(\gamma)$ is invariant is not empty.
\end{assumption}
In what follows, we study the modified problem:
\begin{equation} \label{prob:2}
\min_{\pi\in\Pi(\gamma)} \quad \sum\nolimits_{t=0}^\infty \beta^t c(P_t, S_t).
\end{equation}
As opposed to \eqref{prob:1}, $P_t$$\in$$\mathbb{S}_+^n(\gamma)$ is a hard constraint in \eqref{prob:2}. 
%
%
We adopt \eqref{prob:2} partly due to the fact that its bounded state space is more amenable for the analysis in Sections \ref{sec:exact_value_iteration}-\ref{sec:modified_val_iter_analysis}. Note that Assumption~\ref{asmp1} is not restrictive, since under the assumption that $A$ is Schur stable, there always exists a finite $\gamma$ such that Assumption~\ref{asmp1} holds.
We further assume that the initial covariance matrix $P_{0}$ satisfies $\text{Tr}(P_{0})$$\leq$$\gamma$. Again, this assumption is not restrictive. By the Schur stability assumption, there exists a finite $\gamma$ such that $\lim_{t\rightarrow \infty}\text{Tr}(P_{t})$$<$$ \gamma$, even under the trivial sensor selection policy $S_{t}$$=$$\{\emptyset\}$. Afterwards, the synthesized stationary policy can then be applied.

\section{Exact Value Iteration} \label{sec:exact_value_iteration}
In this section, we characterize the optimal solution to \eqref{prob:2} using exact value iteration.
To start with, define the space $B(\mathbb{S}_+^n(\gamma))$ of bounded functions on $\mathbb{S}_+^n(\gamma)$ by
\[
B(\mathbb{S}_+^n(\gamma)):=\{J:\mathbb{S}_+^n(\gamma)\rightarrow \mathbb{R}: \|J\|_\infty <\infty\}
\]
where $\|J\|_\infty$$:=$$\sup_{P\in \mathbb{S}_+^n(\gamma)} |J(P)|$ is the sup norm. Note that the space $B(\mathbb{S}_+^{n}(\gamma))$ equipped with the sup norm is complete \cite{completeness_lecture}.
Define the Bellman operator $T$ by
\begin{equation}
\label{eq:exact_vi}
(TJ)(P):=\min\nolimits_{S} \{c(P, S)+\beta J(f(P, S))\}.
\end{equation}

\begin{proposition} \label{prop1}
Under Assumption~\ref{asmp1}, the following hold:
\begin{itemize}
\item[(i)] $J$$\in$$B(\mathbb{S}_+^n(\gamma))$ implies that $TJ$$\in$$ B(\mathbb{S}_+^n(\gamma))$.
\item[(ii)] For all $J,J'$$\in$$B(\mathbb{S}_+^n(\gamma))$,
$\|TJ-TJ'\|_\infty$$\leq$$\beta \|J$$-$$J'\|_\infty$.
\item[(iii)] $\exists$ $J^*$$\in$$B(\mathbb{S}_+^n(\gamma))$ satisfying Bellman's equation
\begin{equation}
\label{eq:bellman}
J^*(P)=\min\nolimits_{S} \{ c(P, S)+\beta J^*(f(P, S))\}.
\end{equation}
\item[(iv)] For every $J$$\in$$B(\mathbb{S}_+^n(\gamma))$, the sequence $J_k$$=$$T^k J$ converges uniformly to $J^*$; i.e., $\lim_{k\rightarrow \infty} \|J_k-J^*\|_\infty$$=$$0$.
\end{itemize}
\end{proposition}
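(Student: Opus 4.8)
The plan is to treat this as a textbook contraction–mapping argument for a discounted dynamic program, with the only genuine work being parts (i) and (ii); parts (iii) and (iv) will then follow immediately from the Banach fixed-point theorem, using the stated completeness of $\left(B(\mathbb{S}_+^n(\gamma)),\|\cdot\|_\infty\right)$.

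For part (i), I would first observe that for each fixed $P\in\mathbb{S}_+^n(\gamma)$ the minimization in \eqref{eq:exact_vi} should be read as ranging over the finite collection of subsets $S$ for which $f(P,S)\in\mathbb{S}_+^n(\gamma)$ (so that $J(f(P,S))$ is defined) and $g(S)<\infty$; this collection is nonempty because, by Assumption~\ref{asmp1}, any policy $\bar\pi\in\Pi(\gamma)$ furnishes an admissible action $\bar\pi(P)$. Hence the minimum is attained over a nonempty finite set and $TJ$ is well-defined pointwise. For the bound, on one side I would use $\bar\pi(P)$ as a feasible choice together with $\mathrm{Tr}(P)\le\gamma$ and $\bar g:=\max_S g(S)<\infty$ (finite because $g$ is defined on a finite power set, the $+\infty$ instance of \eqref{eq:exactly_one} being understood as merely excluding those actions), giving $(TJ)(P)\le\gamma+\bar g+\beta\|J\|_\infty$; on the other side, $c(P,S)\ge 0$ gives $(TJ)(P)\ge-\beta\|J\|_\infty$. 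Both bounds are uniform in $P$, so $\|TJ\|_\infty<\infty$ and $TJ\in B(\mathbb{S}_+^n(\gamma))$.

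For part (ii), I would use the standard ``the pointwise minimum is non-expansive'' estimate: fixing $P$ and letting $S'$ attain $(TJ')(P)$, one has $(TJ)(P)-(TJ')(P)\le\bigl[c(P,S')+\beta J(f(P,S'))\bigr]-\bigl[c(P,S')+\beta J'(f(P,S'))\bigr]=\beta\bigl(J(f(P,S'))-J'(f(P,S'))\bigr)\le\beta\|J-J'\|_\infty$, and the symmetric choice (using a minimizer of $(TJ)(P)$) yields $(TJ')(P)-(TJ)(P)\le\beta\|J-J'\|_\infty$. Taking the supremum over $P\in\mathbb{S}_+^n(\gamma)$ gives $\|TJ-TJ'\|_\infty\le\beta\|J-J'\|_\infty$. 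Then parts (iii) and (iv) are obtained at once: by (i), $T$ maps the complete metric space $B(\mathbb{S}_+^n(\gamma))$ into itself, and by (ii) it is a contraction with modulus $\beta\in[0,1)$, so the Banach fixed-point theorem supplies a unique $J^*\in B(\mathbb{S}_+^n(\gamma))$ with $TJ^*=J^*$, which is exactly \eqref{eq:bellman}; and for any $J\in B(\mathbb{S}_+^n(\gamma))$ the iterates $J_k=T^kJ$ obey $\|J_k-J^*\|_\infty\le\beta^k\|J-J^*\|_\infty\to 0$.

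The main obstacle is entirely localized to part (i): one must be careful that the minimization defining $T$ is over a nonempty set of actions that keep the iterate inside the bounded domain $\mathbb{S}_+^n(\gamma)$ — this is precisely the role of Assumption~\ref{asmp1} — and that the per-step cost is uniformly bounded despite the set function $g$ being allowed the value $+\infty$ in examples such as \eqref{eq:exactly_one}. Once this bookkeeping is settled, parts (ii)–(iv) are routine and require no further structure of the Kalman update map $f$ beyond what is already in place.
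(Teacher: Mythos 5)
Your proposal is correct and follows essentially the same route as the paper: Assumption~\ref{asmp1} guarantees a feasible action keeping the iterate in $\mathbb{S}_+^n(\gamma)$ so that $T$ preserves boundedness, the $\beta$-contraction of $T$ is the standard non-expansiveness of the pointwise minimum, and (iii)--(iv) follow from the Banach fixed-point theorem on the complete space $B(\mathbb{S}_+^n(\gamma))$. The only cosmetic difference is in (ii), where you compare minimizers directly while the paper sandwiches $J'$ between $J\pm q$ and applies $T$ to both sides; your treatment of the $+\infty$ values of $g$ and the explicit uniform bounds in (i) are slightly more careful than the paper's but change nothing substantive.
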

\begin{proof}
(i):  Since Assumption~\ref{asmp1} guarantees the existence of a sensor selection $S$ such that $f(P,S)$$\in$$\mathbb{S}_+^n(\gamma)$,
$
(TJ)(P)$$=$$\min_S \{ c(P, S)$$+$$\beta J(f(P,S))\}
$
is finite for $J$ bounded. Thus $J$$ \in $$B(\mathbb{S}_+^n(\gamma))$ implies $TJ$$ \in$$ B(\mathbb{S}_+^n(\gamma))$.

(ii): Let $q$$:=$$\|J-J'\|_\infty$$=$$\sup_{P\in\mathbb{S}_+^n(\gamma)}|J(P)$$-$$J'(P)|$. Then,
\begin{equation}
\label{eq:jm}
J(P)-q \leq J'(P) \leq J(P)+q
\end{equation}
for every $P$$\in$$\mathbb{S}_+^n(\gamma)$. Applying $T$ to each side of \eqref{eq:jm}, and noticing that
\begin{align*}
\min_S \{ & c(P, S)+\beta \left(J(f(P,S))\pm q\right)\} \\
&=\min_S \{ c(P, S)+\beta \left(J(f(P,S))\right)\}\pm \beta q,
\end{align*}
we have
\[
(TJ)(P)-\beta q \leq (TJ')(P) \leq (TJ)(P)+\beta q,
\]
for every $P$$\in$$\mathbb{S}_+^n(\gamma)$. Thus, $\|TJ-TJ'\|_\infty$$\leq$$\beta \|J-J'\|_\infty$.

(iii) and (iv): Since the space $B(\mathbb{S}_+^n(\gamma))$ equipped with the sup norm is a complete metric space, and the operator $T$ a contraction mapping, we can directly apply the Banach fixed point theorem \cite{khamsi2011introduction} to obtain the desired results.
\end{proof}

By Proposition~\ref{prop1}-(iv), one can obtain $J^*$ through a value iteration over  $\mathbb{S}_+^n(\gamma)$. 
Once $J^*$ is obtained, the optimal policy $\pi^*$$\in$$\Pi$ is then characterized according to
\[
\pi^*(P)=\argmin_{S} \{c(P, S)+\beta J^*(f(P, S))\}.
\]
Unfortunately, the value iteration procedure described in \eqref{eq:exact_vi} is not practical to implement as the function $J$ must be evaluated everywhere in the continuous space $\mathbb{S}_+^n(\gamma)$.
 
\section{Approximate value iteration} \label{sec:approx_value_iteration}
Due to the impracticality of continuous state-space value iteration, we now present a computationally tractable procedure wherein we approximate $J^{*}$
through a value iteration over a prespecified, finite set of sample covariance matrices.

\subsection{Mesh grid on $\mathbb{S}_+^n(\gamma)$}
Let $\epsilon$$>$$0$ be a fixed parameter that encodes the resolution of our mesh grid on $\mathbb{S}_{+}^{n}(\gamma)$, where a smaller $\epsilon$ yields a finer resolution. 
To formally define this mesh, consider the set
\begin{align*}
    \mathbb{M} & :=\{\epsilon P : P \in \mathbb{Z}^{n\times n}, P\succeq 0\}, \nonumber
\end{align*}
obtained by scaling the space of $n$$\times$$n$ positive-definite, symmetric, integer-valued matrices by our resolution parameter $\epsilon$. From this set $\mathbb{M}$, we subsequently define the subset 
\begin{align*}
    \mathbb{M}(\gamma) & :=\{\hat{P}\in\mathbb{M}: \text{Tr}(\hat{P})\leq \gamma \}, \nonumber
\end{align*}
where we have extracted the elements $\hat{P}$ from $\mathbb{M}$ with trace at most $\gamma$. Since we began with the discrete set of integer-valued matrices, the set $\mathbb{M}(\gamma)$$\subset$$\mathbb{S}_+^n(\gamma)$ consists of a finite number of elements and serves as a ``mesh grid'' on $\mathbb{S}_+^n(\gamma)$.
Table~\ref{table:1} shows the number of elements $|\mathbb{M}(\gamma)|$ contained in this set depending on the parameters of the system. 
These values were obtained through a brute-force enumeration of all matrices $\epsilon P$ satisfying $\text{Tr}(\epsilon P)$$\leq$$\gamma$ with $P$$\in$$\mathbb{Z}^{n\times n}$ and $P$$\succeq$$0$.\footnote{The authors are unaware of an efficient method to construct $\mathbb{M}(\gamma)$.}
As evident in Table \ref{table:1}, this uniform meshing procedure causes $|\mathbb{M}(\gamma)|$ to grow rapidly with $n$. We use this mesh to facilitate the analysis, however, future research should seek sampling methods that maintain the theoretical guarantees.
%
\begin{table}[t]
\centering
\begin{tabular}{|r|r|r|r|r|}
\hline
  & $\gamma=10$     & $\gamma=20$       &$\gamma=30$      & $\gamma=40$    \\ \hline
$n=2$ & $312$    & $2,261$      & $7,416$     & $17,349$ \\ \hline
$n=3$ & $9,888$   & $507,745$    & $5,487,604$  &     $30,105,633$  \\ \hline
$n=4$ & $217,905$ & $133,895,766$ &      --    &  --      \\ \hline
\end{tabular}
\caption{Number of elements in $\mathbb{M}(\gamma)$ when $\epsilon=1$.}
\label{table:1}
\end{table}
%
 
\subsection{Modified Bellman operator} %

We now introduce a modified value iteration that can be performed exclusively over the finite set $\mathbb{M}(\gamma)$. We will use this modified value iteration to approximate the exact value function $J$ in \eqref{eq:exact_vi}. Define the quantizer $\Theta: \mathbb{S}_+^n$$\rightarrow$$\mathbb{M}$ by
\begin{equation}\label{eq:theta1}
\Theta(P):=\epsilon \text{round}(\nicefrac{1}{\epsilon}P)+\epsilon nI,
\end{equation}
%
where the operator $\text{round}(\cdot)$ rounds each element of the input matrix to the nearest integer. Due to the effects of this rounding, $\epsilon \text{round}(\nicefrac{1}{\epsilon}P)$ in \eqref{eq:theta1} need not upper bound $P$ in the sense that $\epsilon \text{round}(\nicefrac{1}{\epsilon}P)$$\succeq$$P$. As we will formally prove in Lemma~\ref{lem:theta_omega}, the inclusion of the $\epsilon nI$ term in \eqref{eq:theta1} ensures that $\Theta(P)$$\succeq$$P$ does indeed hold. This inequality will be crucial for ensuring that the approximate value functions discussed in the sequel upper bound their corresponding exact values. 
\begin{remark}
Alternatively, adopting the quantizer
\begin{equation}
\label{eq:theta2}
\Theta'(P)=\argmin\nolimits_{Q\in\mathbb{M}}\{\text{Tr}(Q): Q \succeq P\}
\end{equation}
provides a tighter bound than \eqref{eq:theta1} in that $\Theta(P)$$\succeq $$\Theta'(P)$$\succeq$$P$.
However, \eqref{eq:theta2} is computationally difficult to implement. In the numerical examples in Section~\ref{sec:simulation}, we adopt
\begin{equation}
\Theta''(P)=\epsilon  \text{round}(\nicefrac{1}{\epsilon}P+t^*I)  \label{eq:theta3} 
\end{equation}
with $t^*$$=$$\min \{t$$\in$$\mathbb{R} : \epsilon  \text{round}(\nicefrac{1}{\epsilon}P+tI) \succeq P\}$,
which is easier to implement than \eqref{eq:theta2} yet still provides a tighter upper bound than \eqref{eq:theta1}. For the ease of analysis, we adopt \eqref{eq:theta1} to concisely analyze the gap between the corresponding approximate and exact value iterations, however, the theoretical results remain valid when \eqref{eq:theta2} and \eqref{eq:theta3} are incorporated.    \hfill \QED
\end{remark}
Now, introduce a modified Bellman operator $\bar{T}$ defined by
\begin{equation}
\label{eq:tbar}
(\bar{T}J)(P):=\min\nolimits_{S} \{c(P, S)+ \beta J(\Theta(f(P, S)))\}. 
\end{equation}

\subsection{Modified value iteration and suboptimal sensor selection}

For each $J$$\in$$B(\mathbb{S}_+^n(\gamma))$, the modified value iteration is defined by the sequence $\bar{J}_k$$=$$\bar{T}^kJ$.
In the following section, we formally prove that the sequence $\bar{J}_k$ converges uniformly to the unique solution $\bar{J}^*$$\in$$ B(\mathbb{S}_+^n(\gamma))$ of the Bellman equation
\begin{equation}
\label{eq:bellman_j_bar}
\bar{J}^*(P)=\min\nolimits_{S} \{ c(P, S)+\beta \bar{J}^*(\Theta(f(P, S)))\},
\end{equation}
under an appropriate assumption (Proposition~\ref{prop2}).

Notice that $\bar{J}_k$$=$$\bar{T}^kJ$ requires updates only on the finite set $\mathbb{M}(\gamma)$. 
Denote by $\bar{J}_k|_{\mathbb{M}(\gamma)}$$:$$\mathbb{M}(\gamma)$$\rightarrow$$ \mathbb{R}$ the restriction of $\bar{J}_{k}$ to $\mathbb{M}(\gamma)$.
Since each $\bar{J}_{k+1}$ is a function of $\bar{J}_k|_{\mathbb{M}(\gamma)}$ only, the modified value iteration can be performed exclusively on the finite set $\mathbb{M}(\gamma)$. Particularly, for each $P$$\in$$\mathbb{M}(\gamma)$ in each iteration, one can compute the value function update as
\begin{equation}
\label{eq:finite_vi}
\bar{J}_{k+1}|_{\mathbb{M}(\gamma)}(P)=\min\nolimits_S \{c(P,S)+\beta \bar{J}_k|_{\mathbb{M}(\gamma)}(\Theta(f(P,S)))\}
\end{equation}
Through this procedure, the convergence of the value functions to their optimal values, i.e., $\lim_{k\rightarrow \infty} \bar{J}_k|_{\mathbb{M}(\gamma)}$$=$$ \bar{J}^*|_{\mathbb{M}(\gamma)}$, follows directly from the uniform convergence of $\bar{J}_k$.
Once $\bar{J}^*|_{\mathbb{M}(\gamma)}$ is obtained, the solution to \eqref{eq:bellman_j_bar} can be recovered as
\[
\bar{J}^*(P)=\min\nolimits_S \{c(P,S)+\beta \bar{J}^*|_{\mathbb{M}(\gamma)}(\Theta(f(P,S)))\}.
\]
The suboptimal policy $\pi'$$:$$\mathbb{S}_+^n(\gamma)$$\rightarrow$$ 2^{\{1, 2, \cdots, m\}}$ corresponding to the Bellman equation \eqref{eq:bellman_j_bar} is likewise obtained  by
\begin{equation}
\label{eq:subopt_policy}
\pi'(P)=\argmin\nolimits_{S} \{c(P, S)+\beta \bar{J}^*(f(P,S))\}.
\end{equation}
Let $J^{\pi'}$$:$$\mathbb{S}_+^n(\gamma)$$\rightarrow$$\mathbb{R}$ be the value function corresponding to the policy $\pi'$ characterized as the unique solution
%
to 
\[
J^{\pi'}(P)=c(P, \pi'(P))+\beta J^{\pi'}(f(P, \pi'(P))).
\]
We will establish that $\|J^*$$-$$J^{\pi'}\|_\infty$$\leq$$\|J^*-\bar{J}^*\|_\infty$$\leq$$\frac{2\epsilon n^2}{(1-\beta)^2}$ under an assumption based on the main results (Theorems \ref{theo:main} and \ref{theo:policy}).
This bound implies that the performance gap between the optimal policy and the suboptimal policy obtained by the approximate value iteration $\bar{J}_k$$=$$\bar{T}^k J_0$ can be made arbitrarily small by decreasing $\epsilon$, although there is a corresponding increase in computational costs due to increasing $|\mathbb{M}(\gamma)|$.

\section{Analysis of modified value iteration} \label{sec:modified_val_iter_analysis}

Unfortunately, the performance of the modified value update in~\eqref{eq:tbar} is difficult to analyze. For this purpose, we introduce a third Bellman operator $\Omega$$:$$\mathbb{S}_+^n$$\rightarrow$$ \mathbb{S}_+^n$ defined by
\[
\Omega(P):=P+2\epsilon n I.
\]

The following are basic properties of $\Theta(\cdot)$ and $\Omega(\cdot)$:
\begin{lemma} \label{lem:theta_omega}
For each $P$$\succeq$$0$, we have $P$$\prec$$\Theta(P)$$\prec$$\Omega(P)$.
\end{lemma}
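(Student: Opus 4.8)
The plan is to track the rounding error in the Loewner order. Write $R:=\mathrm{round}(\tfrac1\epsilon P)\in\mathbb{Z}^{n\times n}$ and $E:=\tfrac1\epsilon P-R$, so that $E$ is symmetric and every entry satisfies $|E_{ij}|\le\tfrac12$. The crux of the argument is to upgrade this entrywise bound to a spectral one: for any unit vector $x\in\mathbb{R}^n$ I would estimate
\[
|x^\top E x|=\Bigl|\sum_{i,j}E_{ij}x_ix_j\Bigr|\le\tfrac12\Bigl(\sum_i|x_i|\Bigr)^2\le\tfrac{n}{2}\|x\|_2^2,
\]
the last step by Cauchy--Schwarz; equivalently one may invoke Gershgorin's disc theorem to bound every eigenvalue of $E$ by $\tfrac12+\tfrac{n-1}{2}=\tfrac n2$ in absolute value. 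Hence $-\tfrac n2 I\preceq E\preceq\tfrac n2 I$, and since $n\ge1$ this gives the \emph{strict} inequalities $-nI\prec E\prec nI$. Getting the strictness right here is the one place to be slightly careful, and it is precisely why the extra factor ($n$ rather than $\tfrac n2$) was built into $\Theta$ and $\Omega$.

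Next I would rewrite $\Theta(P)$ using $\epsilon R=P-\epsilon E$: this yields
\[
\Theta(P)=\epsilon R+\epsilon nI=P+\epsilon(nI-E).
\]
Since $E\prec nI$ we have $nI-E\succ0$, and $\epsilon>0$, so $\Theta(P)-P=\epsilon(nI-E)\succ0$, which is the lower inequality $P\prec\Theta(P)$. For the upper inequality, a direct computation gives
\[
\Omega(P)-\Theta(P)=(P+2\epsilon nI)-\bigl(P+\epsilon nI-\epsilon E\bigr)=\epsilon(nI+E),
\]
and since $E\succ-nI$ we get $nI+E\succ0$, hence $\Omega(P)-\Theta(P)\succ0$, i.e. $\Theta(P)\prec\Omega(P)$.

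I do not expect any real obstacle: the only substantive ingredient is the passage from the coordinatewise rounding bound $|E_{ij}|\le\tfrac12$ to $-nI\prec E\prec nI$, and everything after that is bookkeeping with the identity $\epsilon R=P-\epsilon E$. Note the hypothesis $P\succeq0$ is not actually used in the argument; it merely fixes the domain of interest.
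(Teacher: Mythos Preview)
Your proof is correct and follows essentially the same route as the paper: bound the entrywise rounding error by $\tfrac12$, upgrade this to the spectral bound $\tfrac{n}{2}<n$, and then read off the two strict Loewner inequalities from $\Theta(P)-P=\epsilon(nI-E)$ and $\Omega(P)-\Theta(P)=\epsilon(nI+E)$. The only cosmetic differences are your sign convention on the error matrix and your use of a direct quadratic-form estimate (or Gershgorin) in place of the paper's Frobenius-norm bound $\max_i|\lambda_i|\le\sqrt{\mathrm{Tr}(E^\top E)}$.
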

\begin{proof}
Define $R$$=$$\text{round}\left(\frac{1}{\epsilon}P\right)$$-$$\frac{1}{\epsilon}P$. By construction, $R$ is symmetric and each entry $R_{ij}$ satisfies $|R_{ij}|$$\leq$$ \frac{1}{2}$. Consequently, the eigenvalues of $R$ are bounded as
\[
\max_i |\lambda_i(R)|\leq \sqrt{\text{Tr}(R^\top R)}=\sqrt{\sum_{i,j} R_{ij}^2}\leq \sqrt{\frac{n^2}{4}}= \frac{n}{2}<n.
\]
Therefore,
\begin{align*}
\Theta(P)-P&=\epsilon \left(\text{round}\left(\frac{1}{\epsilon}P\right)-\frac{1}{\epsilon}P+nI\right) \\
&=\epsilon(R+nI)\succ 0. \\
\Omega(P)-\Theta(P)&=\epsilon \left(\frac{1}{\epsilon}P-\text{round}\left(\frac{1}{\epsilon}P\right)+nI\right) \\
&=\epsilon(-R+nI)\succ 0,
\end{align*}
where the positive definiteness of $(-R+nI)$ follows from our bound on the eigenvalues of $R$.
\end{proof}
Finally, introduce a new Bellman operator $\hat{T}$ defined by
\begin{equation}\label{eq:that}
(\hat{T}J)(P):=\min\nolimits_{S} \{c(P, S)+ \beta J(\Omega(f(P, S)))\}.
\end{equation}
for which we define $\hat{J}^*$$\in$$ B(\mathbb{S}_+^n(\gamma))$ according to
\begin{equation} \label{eq:bellman_j_hat}
\hat{J}^*(P)=\min\nolimits_{S} \{ c(P, S)+\beta \hat{J}^*(\Theta(f(P, S)))\},
\end{equation}

\subsection{Convergence of value iteration}

To proceed further, we make the following assumption.
\begin{assumption}\label{asmp2}
There exists $S$$\subseteq$$\{1, \cdots, N\}$ for each $P$$\in$$\mathbb{S}_+^n(\gamma)$ such that $\Omega(f(P,S))$$\in$$\mathbb{M}(\gamma)$.
\end{assumption}
\begin{remark}
Assumption~\ref{asmp2} is stronger than Assumption~\ref{asmp1}. However, while Assumption \ref{asmp2} is necessary for providing theoretical guarantees, numerical experiments indicate that the sensor selection policy synthesis proposed below often yields a favorable performance without this assumption. \hfill \QED
\end{remark}

We have the following results regarding the Bellman operators $\bar{T}$ and $\hat{T}$ as defined in \eqref{eq:tbar} and \eqref{eq:that}.
\begin{proposition} \label{prop2}
Under Assumption~\ref{asmp2}, the following hold:
\begin{itemize}
\item[(i)] $J$$\in$$B(\mathbb{S}_+^n(\gamma))$ implies that $\bar{T}J$ and $\hat{T}J$$\in$$ B(\mathbb{S}_+^n(\gamma))$.
\item[(ii)] For all $J,J'$$\in$$B(\mathbb{S}_+^n(\gamma))$,
$\|\bar{T}J-\bar{T}J'\|_\infty$$\leq$$\beta \|J-J'\|_\infty$ and $\|\hat{T}J-\hat{T}J'\|_\infty$$\leq$$\beta \|J-J'\|_\infty$.
\item[(iii)] $\exists$ $\bar{J}^*$, $\hat{J}^*$$\in$$B(\mathbb{S}_+^n(\gamma))$ satisfying Bellman's equation \eqref{eq:bellman_j_bar} and \eqref{eq:bellman_j_hat}, respectively.
\item[(iv)] For every $J$$\in$$B(\mathbb{S}_+^n(\gamma))$, define the sequences $\bar{J}_k$$=$$\bar{T}^k J$ and $\hat{J}_k$$=$$\hat{T}^k J$. Then, we have $\lim_{k\rightarrow \infty} \|\bar{J}_k-\bar{J}^*\|_\infty$$=$$0$ and $\lim_{k\rightarrow \infty} \|\hat{J}_k-\hat{J}^*\|_\infty$$=$$0$
\end{itemize}
\end{proposition}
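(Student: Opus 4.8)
The plan is to mirror, essentially verbatim, the argument already carried out for Proposition~\ref{prop1}, observing that both $\bar{T}$ and $\hat{T}$ have the common form $(\mathcal{T}J)(P)=\min_S\{c(P,S)+\beta J(h(P,S))\}$, with $h=\Theta\circ f$ for $\bar{T}$ and $h=\Omega\circ f$ for $\hat{T}$; the role that Assumption~\ref{asmp1} played for $T$ is now taken over by Assumption~\ref{asmp2} together with Lemma~\ref{lem:theta_omega}. I would prove the four items for a generic such $\mathcal{T}$ and specialize at the end.

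For (i), I would first show that for every $P\in\mathbb{S}_+^n(\gamma)$ there is at least one admissible sensor selection $S$ for which $h(P,S)\in\mathbb{S}_+^n(\gamma)$, so that $J(h(P,S))$ is finite. Assumption~\ref{asmp2} supplies an $S$ with $\Omega(f(P,S))\in\mathbb{M}(\gamma)\subset\mathbb{S}_+^n(\gamma)$, which handles $\hat{T}$ directly; for $\bar{T}$, Lemma~\ref{lem:theta_omega} gives $\Theta(f(P,S))\prec\Omega(f(P,S))$, hence $\text{Tr}(\Theta(f(P,S)))<\text{Tr}(\Omega(f(P,S)))\le\gamma$, so $\Theta(f(P,S))\in\mathbb{S}_+^n(\gamma)$ as well. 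Since $c\ge 0$ and $J$ is bounded, this particular $S$ bounds the minimum from above uniformly in $P$ (using $\text{Tr}(P)\le\gamma$), while $c\ge 0$ and boundedness of $J$ bound it from below, so $\mathcal{T}J\in B(\mathbb{S}_+^n(\gamma))$, exactly as in Proposition~\ref{prop1}-(i). For (ii), I would repeat the sandwich argument: with $q:=\|J-J'\|_\infty$ one has $J-q\le J'\le J+q$ pointwise on $\mathbb{S}_+^n(\gamma)$, and since $h(P,S)\in\mathbb{S}_+^n(\gamma)$ for the admissible selections, evaluating at $h(P,S)$, scaling by $\beta\ge 0$, adding $c(P,S)$, and taking $\min_S$ — together with the identity $\min_S\{c(P,S)+\beta(J(h(P,S))\pm q)\}=\min_S\{c(P,S)+\beta J(h(P,S))\}\pm\beta q$ — yields $(\mathcal{T}J)(P)-\beta q\le(\mathcal{T}J')(P)\le(\mathcal{T}J)(P)+\beta q$ for all $P$, i.e.\ $\|\mathcal{T}J-\mathcal{T}J'\|_\infty\le\beta\|J-J'\|_\infty$.

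For (iii) and (iv), since $(B(\mathbb{S}_+^n(\gamma)),\|\cdot\|_\infty)$ is complete and, by (ii), $\bar{T}$ and $\hat{T}$ are contractions with modulus $\beta<1$, the Banach fixed point theorem yields unique fixed points $\bar{J}^*,\hat{J}^*\in B(\mathbb{S}_+^n(\gamma))$ — which are precisely the solutions of \eqref{eq:bellman_j_bar} and \eqref{eq:bellman_j_hat} for the respective operators (I would double-check whether \eqref{eq:bellman_j_hat} should read $\Omega$ rather than $\Theta$ to be consistent with $\hat{T}$ in \eqref{eq:that}) — together with the geometric convergence $\|\mathcal{T}^k J-J^*\|_\infty\le\beta^k\|J-J^*\|_\infty\to 0$ from any initial $J$, i.e.\ uniform convergence.

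I expect the only real content beyond Proposition~\ref{prop1} to be part (i): verifying that $\bar{T}$ and $\hat{T}$ actually preserve boundedness. This is the single place where Assumption~\ref{asmp2} (strictly stronger than Assumption~\ref{asmp1}) is indispensable, and where Lemma~\ref{lem:theta_omega} does the work of transferring the mesh membership $\Omega(f(P,S))\in\mathbb{M}(\gamma)$ down to $\Theta(f(P,S))\in\mathbb{S}_+^n(\gamma)$. Parts (ii)--(iv) are a direct transcription of the contraction-mapping/Banach argument already used for $T$, and I do not anticipate any obstacle there.
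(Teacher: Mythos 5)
Your proposal is correct and follows essentially the same route as the paper's proof: item (i) from Assumption~\ref{asmp2} (with Lemma~\ref{lem:theta_omega} transferring admissibility from $\Omega(f(P,S))$ to $\Theta(f(P,S))$, a step the paper leaves implicit under ``follows similarly''), item (ii) by the identical sandwich argument, and items (iii)--(iv) by the Banach fixed point theorem on the complete space $B(\mathbb{S}_+^n(\gamma))$. Your parenthetical observation that \eqref{eq:bellman_j_hat} should read $\Omega$ rather than $\Theta$ to be consistent with \eqref{eq:that} correctly identifies a typo in the paper.
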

\begin{proof}
Here we prove the aforementioned properties only for the operator $\hat{T}$, as those for $\bar{T}$ follow similarly.

(i):  Since Assumption~\ref{asmp2} guarantees the existence of a sensor selection $S$ such that $\Omega(f(P,S))$$\in$$\mathbb{M}(\gamma)$$\subset$$\mathbb{S}_+^n(\gamma)$,
$
(\hat{T}J)(P)$$=$$\min_S \{ c(P, S)$$+$$\beta J(\Omega(f(P,S)))\}
$
is finite for $J$ bounded. Thus $J$$ \in $$B(\mathbb{S}_+^n(\gamma))$ implies $\hat{T}J$$ \in$$ B(\mathbb{S}_+^n(\gamma))$.

(ii): Let $q$$:=$$\|J-J'\|_\infty$$=$$\sup_{P\in\mathbb{S}_+^n(\gamma)}|J(P)$$-$$J'(P)|$. Then,
\begin{equation}
\label{eq:jm}
J(P)-q \leq J'(P) \leq J(P)+q
\end{equation}
for every $P$$\in$$\mathbb{S}_+^n(\gamma)$. Applying $\hat{T}$ to each side of \eqref{eq:jm}, and noticing that
\begin{align*}
\min_S \{ & c(P, S)+\beta (J(\Omega(f(P,S)))\pm q)\} \\
&=\min_S \{ c(P, S)+\beta J(\Omega(f(P,S)))\}\pm \beta q,
\end{align*}
we have
\[
(\hat{T}J)(P)-\beta q \leq (\hat{T}J')(P) \leq (\hat{T}J)(P)+\beta q,
\]
for every $P$$\in$$\mathbb{S}_+^n(\gamma)$. Thus, $\|\hat{T}J-\hat{T}J'\|_\infty$$\leq$$\beta \|J-J'\|_\infty$.

(iii) and (iv): Since the space $B(\mathbb{S}_+^n(\gamma))$ equipped with the sup norm is a complete metric space, and the operator $\hat{T}$ a contraction mapping, we can again apply the Banach fixed point theorem to obtain the desired results.
\end{proof}
%

\subsection{Comparison of value iteration sequences}
Let $J$$\in$$B(\mathbb{S}_+^{n}(\gamma))$ be an arbitrary constant function.
The following are several basic properties of the three sequences $J_k$$=$$T^k J$, $\bar{J}_k$$=$$\bar{T}^k J$, and $\hat{J}_k$$=$$\hat{T}^kJ$.
\begin{lemma}\label{lem:monotone}
For each $k$$=$$0,1,2,...$ and for every $0$$\preceq$$P$$\preceq$$Q$, we have $J_k(P)$$\leq$$J_k(Q)$ and $\hat{J}_k(P)$$\leq$$\hat{J}_k(Q)$.
\end{lemma}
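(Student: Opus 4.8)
The plan is to prove both assertions simultaneously by induction on $k$, leveraging three monotonicity facts. First, $0\preceq P\preceq Q$ implies $\text{Tr}(P)\leq\text{Tr}(Q)$, so the stage cost $c(P,S)=\text{Tr}(P)+g(S)$ is nondecreasing in $P$ for every fixed $S$. Second, the Kalman update $f(\cdot,S)$ in \eqref{eq:state_p} is monotone in the Loewner order: from $P\preceq Q$ we get $APA^\top+W\preceq AQA^\top+W$, hence (since inversion reverses the order on positive definite matrices) $(APA^\top+W)^{-1}\succeq(AQA^\top+W)^{-1}$; adding the common term $C_S^\top V_S^{-1}C_S$ preserves this, and inverting once more reverses it back, yielding $f(P,S)\preceq f(Q,S)$. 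Third, $\Omega(\cdot)$ is trivially monotone, as $\Omega(P)-\Omega(Q)=P-Q$. Spelling out the second fact is the only genuinely computational part, and it is the standard Loewner-monotonicity argument for the Riccati/information-form update.

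For the base case $k=0$, we have $J_0=\hat{J}_0=J$, which is constant, so the claimed inequalities hold with equality. For the inductive step, assume $J_k$ (resp. $\hat{J}_k$) is monotone on $\mathbb{S}_+^n(\gamma)$ and fix $0\preceq P\preceq Q$ with $P,Q\in\mathbb{S}_+^n(\gamma)$. For any $S$ admissible at $Q$, the three facts give $c(P,S)\leq c(Q,S)$ together with $f(P,S)\preceq f(Q,S)$ (resp. $\Omega(f(P,S))\preceq\Omega(f(Q,S))$), so the induction hypothesis yields $c(P,S)+\beta J_k(f(P,S))\leq c(Q,S)+\beta J_k(f(Q,S))$ (resp. the analogous inequality with $\hat{J}_k$ and $\Omega$). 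Taking the minimum over $S$ on both sides then gives $J_{k+1}(P)\leq J_{k+1}(Q)$ (resp. $\hat{J}_{k+1}(P)\leq\hat{J}_{k+1}(Q)$), closing the induction.

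The one point requiring a little care is that the effective index set in the minimization can depend on the state, since $f(Q,S)$ (resp. $\Omega(f(Q,S))$) may leave $\mathbb{S}_+^n(\gamma)$ even when its $P$-counterpart does not. This is resolved by combining the first two facts: $f(P,S)\preceq f(Q,S)$ forces $\text{Tr}(f(P,S))\leq\text{Tr}(f(Q,S))$, so every $S$ admissible at $Q$ is also admissible at $P$ (and analogously with $\Omega$), meaning $J_{k+1}(P)$ is a minimum over a superset of the terms contributing to $J_{k+1}(Q)$, which only strengthens the inequality; for inadmissible $S$ the corresponding term is $+\infty$ and may be ignored. I expect this bookkeeping, plus the verification of the Loewner-monotonicity of $f$, to be the only non-routine ingredients. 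Finally, it is worth noting — consistent with the statement, which deliberately omits $\bar{J}_k$ — that this argument fails for $\bar{T}$, because the entrywise-rounding quantizer $\Theta$ is not monotone in the Loewner order; this is precisely the reason the surrogate operator $\hat{T}$ built from the monotone map $\Omega$ was introduced.
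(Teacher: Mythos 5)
Your proof is correct and follows essentially the same induction as the paper's: establish monotonicity of $c(\cdot,S)$ and of $\Omega\circ f(\cdot,S)$ in the Loewner order, then propagate through the minimum over $S$ using the induction hypothesis. You simply spell out details the paper leaves implicit (the information-form argument for the Loewner monotonicity of $f$, the admissibility of $S$ across states, and the remark on why $\Theta$ is excluded), all of which are accurate.
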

\begin{proof}
Here we prove only that $\hat{J}_k(P)$$\leq$$\hat{J}_k(Q)$, as the remaining inequality follows by a nearly identical argument. We prove the claim by induction. The claim trivially holds for $k$$=$$0$ since $\hat{J}_0$ is an arbitrary constant function by construction.
Suppose that $\hat{J}_k(P)$$\leq$$\hat{J}_k(Q)$ holds for some $k$$\geq$$0$. Now, for every $0$$\preceq$$ P'$$\preceq$$Q'$ and for every $S$, we have that  $c(P',S)$$\leq$$c(Q',S)$ and $\Omega(f(P',S))$$\leq$$ \Omega(f(Q',S))$. Thus,
\begin{align*}
\hat{J}_{k+1}(P')&=\min_S \{c(P', S)+\beta \hat{J}_k(\Omega(f(P',S)))\} \\
&\leq \min_S \{c(Q', S)+\beta \hat{J}_k(\Omega(f(Q',S)))\} \\
&=\hat{J}_{k+1}(Q'),
\end{align*}
completing the proof.
\end{proof}
\begin{proposition} \label{prop:j_bar_hat}
For each $k$$=$$0, 1, 2, \ldots$ and $P$$\succeq$$0$, we have $J_k(P)$$\leq$$\bar{J}_k(P)$$\leq $$\hat{J}_k(P)$.
\end{proposition}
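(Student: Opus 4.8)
The plan is to prove both inequalities $J_k(P) \leq \bar{J}_k(P)$ and $\bar{J}_k(P) \leq \hat{J}_k(P)$ simultaneously by induction on $k$, exploiting the ordering $P \prec \Theta(P) \prec \Omega(P)$ from Lemma~\ref{lem:theta_omega} together with the monotonicity of $J_k$ and $\hat{J}_k$ established in Lemma~\ref{lem:monotone}. The base case $k=0$ is immediate: $J_0 = \bar{J}_0 = \hat{J}_0 = J$ is the same constant function, so all three agree.

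For the inductive step, suppose $J_k(P) \leq \bar{J}_k(P) \leq \hat{J}_k(P)$ for all $P \succeq 0$. First I would handle $J_k(P) \leq \bar{J}_k(P)$. Fix $P$ and let $S$ be the minimizing sensor selection in the definition of $(\bar{T}J_k)(P) = \bar{J}_{k+1}(P)$. Since $f(P,S) \preceq \Theta(f(P,S))$ by Lemma~\ref{lem:theta_omega}, monotonicity of $J_k$ (Lemma~\ref{lem:monotone}) gives $J_k(f(P,S)) \leq J_k(\Theta(f(P,S)))$, and combined with the inductive hypothesis $J_k \leq \bar{J}_k$ we get $J_k(f(P,S)) \leq \bar{J}_k(\Theta(f(P,S)))$. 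Therefore
\[
J_{k+1}(P) \leq c(P,S) + \beta J_k(f(P,S)) \leq c(P,S) + \beta \bar{J}_k(\Theta(f(P,S))) = \bar{J}_{k+1}(P),
\]
where the first inequality holds because $J_{k+1}(P)$ is the minimum over all sensor selections. The second inequality $\bar{J}_k(P) \leq \hat{J}_k(P)$ is analogous: fix the minimizer $S$ in $(\hat{T}J_k)(P)$, use $\Theta(f(P,S)) \preceq \Omega(f(P,S))$ and monotonicity of $\hat{J}_k$ to get $\hat{J}_k(\Theta(f(P,S))) \leq \hat{J}_k(\Omega(f(P,S)))$, apply the inductive hypothesis $\bar{J}_k \leq \hat{J}_k$ to obtain $\bar{J}_k(\Theta(f(P,S))) \leq \hat{J}_k(\Omega(f(P,S)))$, and conclude $\bar{J}_{k+1}(P) \leq c(P,S) + \beta \bar{J}_k(\Theta(f(P,S))) \leq c(P,S) + \beta \hat{J}_k(\Omega(f(P,S))) = \hat{J}_{k+1}(P)$ by minimality of $\bar{J}_{k+1}(P)$.

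The main subtlety, and the point I would be careful to state explicitly, is the correct direction of the comparison argument: at each step one picks the minimizer for the operator that appears \emph{lower} in the chain of inequalities (so that its value equals the left side), and then bounds that single term from above, using that the left side's operator is a minimum over $S$. Combining both chains yields $J_k(P) \leq \bar{J}_k(P) \leq \hat{J}_k(P)$ for all $k$, completing the induction. One should also note that $\bar{J}_k$ inherits the monotonicity needed only indirectly — it is not used here — whereas the monotonicity of $J_k$ and $\hat{J}_k$ from Lemma~\ref{lem:monotone} is essential precisely because the quantizers $\Theta$ and $\Omega$ shift the argument of the value function, and without monotonicity one could not convert the matrix ordering $f(P,S) \preceq \Theta(f(P,S)) \preceq \Omega(f(P,S))$ into an ordering of function values.
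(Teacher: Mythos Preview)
Your argument is correct and follows essentially the same route as the paper's proof: induction on $k$, combining Lemma~\ref{lem:theta_omega} with the monotonicity of $J_k$ and $\hat{J}_k$ from Lemma~\ref{lem:monotone} and the induction hypothesis; the only cosmetic difference is that the paper establishes $J_k(f(P,S)) \leq \bar{J}_k(\Theta(f(P,S))) \leq \hat{J}_k(\Omega(f(P,S)))$ for \emph{every} $S$ and then takes the minimum, whereas you fix the minimizer of the larger side. Two small slips worth cleaning up: you write $(\bar{T}J_k)$ and $(\hat{T}J_k)$ where you mean $(\bar{T}\bar{J}_k)$ and $(\hat{T}\hat{J}_k)$, and your closing ``subtlety'' paragraph describes picking the minimizer for the \emph{lower} side, which is the opposite of what your (correct) argument actually does.
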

\begin{proof}
We again prove the claim by induction. The claim trivially holds for $k$$=$$0$ as each $J$$\in$$B(\mathbb{S}_+^{n}(\gamma))$ is an arbitrary constant function. Assume that the claim holds for $k$$\geq$$0$. Then, for each $P'$$\succeq$$0$,
\begin{subequations}
\begin{align}
J_k(P')&\leq J_k(\Theta(P')) \label{prop4-1} \\
&\leq   \bar{J}_k(\Theta(P')) \label{prop4-2} \\
&\leq  \hat{J}_k(\Theta(P')) \label{prop4-3} \\
&\leq \hat{J}_k(\Omega(P')), \label{prop4-4}
\end{align}
\end{subequations}
where we first obtain the inequality in (\ref{prop4-1}) by recalling that $P'$$\preceq$$\Theta(P')$ by Lemma 1, which then implies that $J(P')$$\leq$$J(\Theta(P'))$ by Lemma 2. The two following inequalities in \eqref{prop4-2} and \eqref{prop4-3} are each obtained as a result of our induction hypothesis. Finally, we can again invoke the results of Lemmas 1 and 2 to obtain (\ref{prop4-4}) from \eqref{prop4-3}. Now, setting $P'$$=$$f(P,S)$, we have
\[
J_k(f(P,S)) \leq \bar{J}_k(\Theta(f(P,S))) \leq \hat{J}_k(\Omega(f(P,S)))
\]
for each $P$$\succeq$$0$ and $S$$\subseteq$$\{1,\ldots,N\}$. Thus,
\begin{align*}
\min_S & \{c(P,S)+\beta J_k(f(P,S))\} \\
&\leq  \min_S \{c(P,S)+\beta \bar{J}_k(\Theta(f(P,S)))\} \\
& \leq  \min_S \{c(P,S)+\beta \hat{J}_k(\Omega(f(P,S)))\}
\end{align*}
which implies that $J_{k+1}(P)$$\leq$$\bar{J}_{k+1}(P)$$\leq$$\hat{J}_{k+1}(P)$.
\end{proof}

\subsection{Upper bound of $\hat{J}_k$}

In Lemma~\ref{lem:monotone}, we have shown that $\hat{J}_k$ is monotonically non-decreasing. Proposition~\ref{prop:ub} below shows that the ``slope'' of  $\hat{J}_k$ is bounded. 
Define a convergent sequence by
\[
L_{k+1}=1+\beta L_k,\;\; L_0=0,
\]
where $\beta$ is our discount factor. By inspection, $L_{1},\ldots,L_{k},\ldots$ is the sequence of partial sums of a geometric series. It is straightforward to bound these terms by $L_k$$<$$ L$$:=$$\nicefrac{1}{1-\beta}$.
\begin{proposition} \label{prop:ub}
$\hat{J}_k(P$$+$$\Delta P)$$\leq$$\hat{J}_k(P)$$+$$L_k \text{Tr}(\Delta P)$ for every $k$$=$$0, 1, 2, \ldots$ and any $P,\Delta P$$ \succeq$$ 0$.
\end{proposition}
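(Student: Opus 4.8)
The plan is to argue by induction on $k$, using the Bellman recursion $\hat{J}_{k+1}=\hat{T}\hat{J}_{k}$ together with the monotonicity already available from Lemmas~\ref{lem:theta_omega} and~\ref{lem:monotone}. The base case $k=0$ is trivial: $\hat{J}_{0}$ is the arbitrary constant function fixed above and $L_{0}=0$, so $\hat{J}_{0}(P+\Delta P)=\hat{J}_{0}(P)=\hat{J}_{0}(P)+L_{0}\,\text{Tr}(\Delta P)$. For the inductive step, suppose the claimed bound holds at level $k$ for all positive semidefinite arguments, fix $P,\Delta P\succeq 0$, and set $Q:=P+\Delta P$. Let $S^{*}$ be a minimizer in $\hat{J}_{k+1}(P)=\min_{S}\{c(P,S)+\beta\hat{J}_{k}(\Omega(f(P,S)))\}$. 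Since $S^{*}$ is feasible (though not necessarily optimal) for the minimization defining $\hat{J}_{k+1}(Q)$, we get the one-sided estimate $\hat{J}_{k+1}(Q)\le c(Q,S^{*})+\beta\hat{J}_{k}(\Omega(f(Q,S^{*})))$.

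I would then split the right-hand side into exactly the two pieces that produce the recursion $L_{k+1}=1+\beta L_{k}$. The stage cost contributes $c(Q,S^{*})=\text{Tr}(Q)+g(S^{*})=c(P,S^{*})+\text{Tr}(\Delta P)$. For the cost-to-go term, write $G(P):=APA^{\top}+W$; since $G(P)\preceq G(Q)$ and $X\mapsto(X^{-1}+M)^{-1}$ is operator monotone (this monotonicity of $f(\cdot,S)$ is the same one already invoked inside the proof of Lemma~\ref{lem:monotone}), we get $f(P,S^{*})\preceq f(Q,S^{*})$, so that $\Omega(f(Q,S^{*}))=\Omega(f(P,S^{*}))+R$ with $R:=f(Q,S^{*})-f(P,S^{*})\succeq 0$ (the additive $2\epsilon nI$ shift cancels). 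Applying the induction hypothesis with base point $\Omega(f(P,S^{*}))$ and increment $R$ gives $\hat{J}_{k}(\Omega(f(Q,S^{*})))\le\hat{J}_{k}(\Omega(f(P,S^{*})))+L_{k}\,\text{Tr}(R)$. Putting these together, $\hat{J}_{k+1}(Q)\le\big[c(P,S^{*})+\beta\hat{J}_{k}(\Omega(f(P,S^{*})))\big]+\text{Tr}(\Delta P)+\beta L_{k}\,\text{Tr}(R)=\hat{J}_{k+1}(P)+\text{Tr}(\Delta P)+\beta L_{k}\,\text{Tr}(R)$, so the inductive step closes as soon as one shows $\text{Tr}(R)\le\text{Tr}(\Delta P)$, i.e., that along the Riccati-type map $f(\cdot,S)$ the trace of the output grows no faster than the trace of the input.

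This trace inequality $\text{Tr}\big(f(Q,S)-f(P,S)\big)\le\text{Tr}(Q-P)$ for $P\preceq Q$ is the heart of the proof, and I expect it to be the main obstacle. The route I would take is to exploit the measurement-update form $f(P,S)=G(P)-G(P)C_{S}^{\top}(V_{S}+C_{S}G(P)C_{S}^{\top})^{-1}C_{S}G(P)$ together with operator concavity of $X\mapsto(X^{-1}+M_{S})^{-1}$, where $M_{S}:=C_{S}^{\top}V_{S}^{-1}C_{S}$: parametrizing $P_{s}:=P+s\Delta P$, $G_{s}:=G(P_{s})$ and differentiating, one finds $f(Q,S)-f(P,S)=\int_{0}^{1}N_{s}\,(\Delta P)\,N_{s}^{\top}\,ds$ with $N_{s}:=f(P_{s},S)\,G_{s}^{-1}A=(I+G_{s}M_{S})^{-1}A$, whence $\text{Tr}(R)=\int_{0}^{1}\text{Tr}\big(N_{s}^{\top}N_{s}\,\Delta P\big)\,ds\le\text{Tr}(\Delta P)$ provided $N_{s}^{\top}N_{s}\preceq I$. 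The delicate remaining point is precisely this operator-norm bound $\|N_{s}\|=\|(I+G_{s}M_{S})^{-1}A\|\le 1$: the factor $(I+G_{s}M_{S})^{-1}=f(P_{s},S)G_{s}^{-1}$ is the covariance contraction produced by a sensor update (recall $f(P_{s},S)\preceq G_{s}$), while Schur stability of $A$ must be used to dominate the time-update amplification. Since $(I+G_{s}M_{S})^{-1}$ is not symmetric, a naive eigenvalue argument is insufficient; I would instead pass to the similarity $I+G_{s}M_{S}\sim I+G_{s}^{1/2}M_{S}G_{s}^{1/2}\succeq I$ and combine the resulting contraction with the spectral properties of $A$ (using, if needed, that the covariances of interest remain in the bounded set $\mathbb{S}_{+}^{n}(\gamma)$, so that $G_{s}$ is uniformly conditioned). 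Once $\text{Tr}(R)\le\text{Tr}(\Delta P)$ is secured, the bound $\hat{J}_{k+1}(Q)\le\hat{J}_{k+1}(P)+(1+\beta L_{k})\,\text{Tr}(\Delta P)=\hat{J}_{k+1}(P)+L_{k+1}\,\text{Tr}(\Delta P)$ follows, completing the induction.
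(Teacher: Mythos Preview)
Your overall architecture matches the paper's exactly: induction on $k$, trivial base case, and in the inductive step the decomposition $c(Q,S)=c(P,S)+\text{Tr}(\Delta P)$ together with a trace bound on the Riccati increment feeding into the induction hypothesis. Where the paper invokes the Appendix result Lemma~\ref{lem:f_ub} (itself proved via Lemma~\ref{lem:matrixinversion}) to manufacture a positive-semidefinite majorant $\Delta f$ with $f(P+\Delta P,S)\preceq f(P,S)+\Delta f$ and $\text{Tr}(\Delta f)\le\text{Tr}(\Delta P)$, you instead work with the exact difference $R=f(Q,S^{*})-f(P,S^{*})$ and an integral representation $R=\int_0^1 N_s\,\Delta P\,N_s^{\top}\,ds$. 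These are the same idea: the paper's $\Delta f$ equals $N_0\,\Delta P\,N_0^{\top}$, the linearization of your integrand at $s=0$, and by operator concavity of $X\mapsto(X^{-1}+M_S)^{-1}$ one in fact has $R\preceq\Delta f$, so your $R$ is the tighter object.

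The genuine gap is the step you yourself flag as delicate: the bound $N_s^{\top}N_s\preceq I$, i.e.\ $\|(I+G_sM_S)^{-1}A\|\le 1$. Your proposed mechanism---pass to the similarity $I+G_sM_S\sim I+G_s^{1/2}M_SG_s^{1/2}\succeq I$ and then invoke Schur stability of $A$---cannot close it. Similarity controls eigenvalues, not singular values, and Schur stability bounds only the spectral radius $\rho(A)<1$, not $\|A\|$. In the degenerate case $M_S=0$ (empty sensor set) you get $N_s=A$, and there exist Schur-stable $A$ with $\|A\|$ arbitrarily large, so the inequality $N_s^{\top}N_s\preceq I$ is false under the stated hypotheses. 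The paper's Lemma~\ref{lem:matrixinversion} runs into the identical obstruction at its final line $\text{Tr}(AXA^{\top})\le\text{Tr}(X)$, which likewise holds iff $A^{\top}A\preceq I$. So your plan is not wrong \emph{relative to the paper}---both arguments tacitly need $\|A\|\le 1$ rather than mere Schur stability---but the specific device you sketch (similarity plus spectrum) will not deliver the bound, and you should either strengthen the hypothesis on $A$ or record the restriction explicitly.
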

\begin{proof}
We prove the claim by induction. The claim trivially holds for $k$$=$$0$. Suppose that the claim holds for some $k$$\geq$$0$. For $P'$$\succeq$$0$ and $\Delta P'$$\succeq$$0$, by Lemma~\ref{lem:f_ub} (provided in the Appendix), we have that
\begin{equation}\label{eq:lemma4-apply}
    f(P'+\Delta P', S) \preceq f(P', S)+\Delta f
\end{equation}
for some $\Delta f$$ \succeq$$ 0$ chosen such that $\text{Tr}(\Delta f)$$\leq$$ \text{Tr}(\Delta P')$. Recalling the monotonicity of $\hat{J}_k$ by Lemma \ref{lem:monotone}, we have 
\begin{align*}
    \hat{J}_k(\Omega(f(P'+\Delta P', S)))&=\hat{J}_k(f(P'+\Delta P', S)+2\epsilon n I) \\
    &\leq \hat{J}_k(f(P', S) + \Delta f +2\epsilon n I ) \\
    &=\hat{J}_k(\Omega(f(P', S))+\Delta f),
\end{align*}
where the first equality is by the definition of $\Omega(\cdot)$, the subsequent inequality is obtained by substituting for the relation in \eqref{eq:lemma4-apply}, and the final equality is obtained by repackaging the first and third term in $\hat{J}_{k}(\cdot)$ into the definition of $\Omega(\cdot)$. Using this result,
\begin{subequations}
\begin{align}
& \hat{J}_{k+1}(P'+\Delta P') \nonumber \\
&=\min_S \{c(P',S)+\text{Tr}(\Delta P') \nonumber \\
& \qquad \qquad \qquad +\beta \hat{J}_k(\Omega(f(P'+\Delta P', S)))\} \label{eq:prop5-1} \\
&\leq \min_S \{c(P',S)+\text{Tr}(\Delta P') \nonumber \\
& \qquad \qquad \qquad +\beta \hat{J}_k(\Omega(f(P', S))+\Delta f)\} \label{eq:prop5-2} \\
&\leq \min_S \{c(P',S)+\text{Tr}(\Delta P') \nonumber \\ 
& \qquad \qquad \qquad +\beta \hat{J}_k(\Omega(f(P', S)))+\beta L_k \text{Tr}(\Delta f)\} \label{eq:prop5-3} \\
&\leq \min_S \{c(P',S) +\beta \hat{J}_k(\Omega(f(P', S))) \nonumber \\
& \qquad \qquad \qquad +(1+\beta L_k) \text{Tr}(\Delta P')\} \label{eq:prop5-4} \\
&=\min_S \{c(P',S) +\beta \hat{J}_k(\Omega(f(P', S)))\} \nonumber \\
& \qquad \qquad \qquad +L_{k+1}\text{Tr}(\Delta P') \label{eq:prop5-5} \\
&=\hat{J}_{k+1}(P')+L_{k+1}\text{Tr}(\Delta P') \label{eq:prop5-6},
\end{align}
\end{subequations}
where we have obtained the first equality by noting that $\text{Tr}(P'$$+$$\Delta P')$$=$$\text{Tr}(P')$$+$$\text{Tr}(\Delta P')$ and extracting the latter term from $c(P'$$+$$ \Delta P',S)$. We then obtain \eqref{eq:prop5-2} by applying our previous result. The inequality in \eqref{eq:prop5-3} then follows from the induction hypothesis. Recalling that $\Delta f$ is chosen such that $\text{Tr}(\Delta f)$$\leq$$\text{Tr}(\Delta P)$ holds by construction, we substitute for this relation and group the $\text{Tr}(\Delta P)$ terms to obtain \eqref{eq:prop5-4}. Noting the definition of $L_{k+1}$ and that $\text{Tr}(\Delta P)$ is independent of our sensor selection $S$ yields \eqref{eq:prop5-5}. Substituting for the definition of $\hat{J}_{k}(\cdot)$ then yields \eqref{eq:prop5-6}.
\end{proof}
By Proposition~\ref{prop:j_bar_hat}, we have $J_k$$\leq$$\bar{J}_k$$\leq \hat{J}_k$.
The next two propositions upper bound of the gap between $J_k$ and $\hat{J}_k$.
\begin{proposition} \label{prop:l_ub}
Suppose $\hat{J}_k$$=$$\hat{T}^k J_0$ where $J_0$$\in$$ B(\mathbb{S}_+^{n}(\gamma))$ is a constant function. For each $k$$=$$0,1,2, \ldots$, we have
\[
0\leq (\hat{T}\hat{J}_k)(P)-(T\hat{J}_k)(P)\leq 2\epsilon n^2 L \;\;\; \forall P \in \mathbb{S}_+^{n}(\gamma).
\]
\end{proposition}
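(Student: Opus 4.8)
The goal is to bound the difference between applying $\hat{T}$ and $T$ to the same function $\hat{J}_k$. The plan is to fix an arbitrary $P \in \mathbb{S}_+^n(\gamma)$, write out both Bellman operators at $P$, and compare them termwise inside the minimization over $S$. Both $(\hat{T}\hat{J}_k)(P)$ and $(T\hat{J}_k)(P)$ have the form $\min_S \{ c(P,S) + \beta \hat{J}_k(\,\cdot\,) \}$, where the argument of $\hat{J}_k$ is $\Omega(f(P,S)) = f(P,S) + 2\epsilon n I$ for $\hat{T}$ and $f(P,S)$ for $T$. Since $f(P,S) \succeq 0$ and $2\epsilon n I \succeq 0$, the lower bound $(\hat{T}\hat{J}_k)(P) - (T\hat{J}_k)(P) \geq 0$ follows immediately from the monotonicity of $\hat{J}_k$ established in Lemma~\ref{lem:monotone} (which applies since $\hat{J}_k = \hat{T}^k J_0$ with $J_0$ constant), together with the elementary fact that taking a minimum over $S$ preserves the pointwise inequality.

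For the upper bound, I would apply Proposition~\ref{prop:ub} with the choice $P \leftarrow f(P,S)$ and $\Delta P \leftarrow 2\epsilon n I$, which is legitimate since both matrices are positive semidefinite. This gives $\hat{J}_k(f(P,S) + 2\epsilon n I) \leq \hat{J}_k(f(P,S)) + L_k \,\text{Tr}(2\epsilon n I) = \hat{J}_k(f(P,S)) + 2\epsilon n^2 L_k$, using $\text{Tr}(2\epsilon n I) = 2 \epsilon n \cdot n = 2\epsilon n^2$. Multiplying by $\beta$, adding $c(P,S)$, and taking $\min_S$ of both sides yields $(\hat{T}\hat{J}_k)(P) \leq (T\hat{J}_k)(P) + 2\epsilon n^2 \beta L_k$, where I use that the additive constant $2\epsilon n^2 \beta L_k$ does not depend on $S$ and so passes through the minimization unchanged. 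Finally, since $L_k < L := \nicefrac{1}{1-\beta}$ and $\beta < 1$, we have $2\epsilon n^2 \beta L_k < 2\epsilon n^2 L$, giving the stated bound.

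The one point requiring a little care — and the most likely "obstacle," though it is minor — is checking that the hypotheses of Propositions~\ref{prop:ub} and Lemma~\ref{lem:monotone} are actually met: both were stated for sequences generated from an arbitrary constant function $J$, and here $\hat{J}_k = \hat{T}^k J_0$ with $J_0$ constant is exactly of that form, so the monotonicity and slope bounds apply verbatim. One should also confirm that $f(P,S)$ and $f(P,S) + 2\epsilon n I$ lie in a region where $\hat{J}_k$ is defined and finite; under Assumption~\ref{asmp2} this is handled because $\hat{J}_k \in B(\mathbb{S}_+^n(\gamma))$ and Proposition~\ref{prop:ub}'s bound is an inequality about the difference of values, so no domain issue arises (the slope estimate is a statement purely about the function $\hat{J}_k$, valid wherever it is evaluated). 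No genuinely hard step is involved; the result is a direct corollary of the slope bound in Proposition~\ref{prop:ub} combined with monotonicity.
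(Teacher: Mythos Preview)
Your proposal is correct and follows essentially the same route as the paper: the lower bound via the monotonicity of $\hat{J}_k$ (Lemma~\ref{lem:monotone}), and the upper bound via Proposition~\ref{prop:ub} applied with $\Delta P = 2\epsilon n I$, then passing the $S$-independent constant through the $\min$ and using $L_k < L$. Your bookkeeping is in fact slightly cleaner than the paper's --- you retain the factor $\beta$ multiplying $L_k$ before relaxing to $L$, whereas the paper silently drops it (which is harmless since $\beta<1$); the paper also writes $\hat{J}_{k-1}$ and $L_{k-1}$ where $\hat{J}_k$ and $L_k$ are meant, but this index slip does not affect the argument.
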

\begin{proof}
The first inequality follows from the monotonicity of $\hat{J}_k$, as proven in Lemmas \ref{lem:theta_omega} and \ref{lem:monotone}:
\begin{align*}
(T\hat{J}_k)(P)&=\min_S \{c(P,S)+\beta \hat{J}_{k-1}(f(P,S))\} \\
&\leq \min_S \{c(P,S)+\beta \hat{J}_{k-1}(\Omega(f(P,S)))\} \\
&=(\hat{T}\hat{J}_k)(P).
\end{align*}
To see the second inequality,
\begin{subequations}
\begin{align}
(\hat{T} & \hat{J}_k)(P) = \min_S \{c(P,S)+\beta \hat{J}_{k-1}(\Omega(f(P,S)))\} \label{prop6-1} \\
&= \min_S \{c(P,S)+\beta \hat{J}_{k-1}(f(P,S)+2\epsilon n I)\} \label{prop6-2} \\
&\leq \min_S \{c(P,S)+\beta \hat{J}_{k-1}(f(P,S)) \nonumber \\
& \qquad \qquad \qquad \qquad \qquad +L_{k-1}\text{Tr}(2\epsilon n I)\} \label{prop6-3} \\
&\leq \min_S \{c(P,S)+\beta \hat{J}_{k-1}(f(P,S))\}+2\epsilon n^2 L \label{prop6-4} \\
&=(T\hat{J}_k)(P)+2\epsilon n^2 L \label{prop6-5},
\end{align}
\end{subequations}
where \eqref{prop6-2} follows \eqref{prop6-1} by the definition of $\Omega$. Then, \eqref{prop6-3} follows from the result of Proposition~\ref{prop:ub}. We subsequently obtain \eqref{prop6-4} by noting that the final term in \eqref{prop6-3} is independent of our sensor selection $S$ and recalling the fact that, for all $k$$=$$0,1,2,\ldots$, $L_{k}$$\leq$$L$. Finally, we obtain \eqref{prop6-5} by the definition of $(T\hat{J}_k)(P)$.
\end{proof}
\begin{proposition} \label{prop:j_hat}
Suppose $J_k$$=$$T^k J_0$ and $\hat{J}_k$$=$$\hat{T}^k J_0$ where $J_0$$\in$$ B(\mathbb{S}_+^{n}(\gamma))$ is a constant function. We have
\begin{equation*}
    \limsup_{k\rightarrow \infty} \|J_k-\hat{J}_k\|_\infty \leq \frac{2\epsilon n^2}{(1-\beta)^2}.
\end{equation*}
\end{proposition}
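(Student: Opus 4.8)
The plan is to turn the pieces already assembled into a single scalar recursion for $d_k := \|J_k - \hat{J}_k\|_\infty$ and then pass to the limit. Since $J_{k+1} = TJ_k$ and $\hat{J}_{k+1} = \hat{T}\hat{J}_k$ hold by definition of the two iterated sequences, inserting the intermediate term $T\hat{J}_k$ and applying the triangle inequality gives
\begin{align*}
d_{k+1} = \|TJ_k - \hat{T}\hat{J}_k\|_\infty &\le \|TJ_k - T\hat{J}_k\|_\infty + \|T\hat{J}_k - \hat{T}\hat{J}_k\|_\infty \\
&\le \beta\, d_k + 2\epsilon n^2 L .
\end{align*}
Here the first summand is bounded using the contraction property of $T$ from Proposition~\ref{prop1}-(ii) (applicable because $J_k$ and $\hat{J}_k$ both lie in $B(\mathbb{S}_+^n(\gamma))$ by Propositions~\ref{prop1}-(i) and \ref{prop2}-(i)), and the second summand is bounded by Proposition~\ref{prop:l_ub}, whose pointwise one-sided estimate $0 \le (\hat{T}\hat{J}_k)(P) - (T\hat{J}_k)(P) \le 2\epsilon n^2 L$ controls the sup norm of the difference.

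Next I would solve the recursion. Because $J_0$ and $\hat{J}_0$ are the \emph{same} constant function, $d_0 = 0$. An easy induction — mirroring the structure of the sequence $L_k$ defined just before Proposition~\ref{prop:ub} — then shows $d_k \le L_k \cdot 2\epsilon n^2 L$: indeed, $d_{k+1} \le \beta L_k\cdot 2\epsilon n^2 L + 2\epsilon n^2 L = (1+\beta L_k)\,2\epsilon n^2 L = L_{k+1}\cdot 2\epsilon n^2 L$. Using $L_k < L = \nicefrac{1}{1-\beta}$ yields $d_k < 2\epsilon n^2 L^2 = \frac{2\epsilon n^2}{(1-\beta)^2}$ for every $k$, and taking $\limsup_{k\to\infty}$ gives the claim. (Alternatively, one can skip the explicit unrolling: $J_k \to J^*$ and $\hat{J}_k \to \hat{J}^*$ uniformly by Propositions~\ref{prop1} and \ref{prop2}, so $d_k$ converges; applying $\limsup$ directly to $d_{k+1} \le \beta d_k + 2\epsilon n^2 L$ gives $(1-\beta)\limsup_k d_k \le 2\epsilon n^2 L$, i.e.\ the same bound.)

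I do not expect a genuine obstacle here — the proposition is essentially a bookkeeping consequence of Propositions~\ref{prop1}, \ref{prop2}, and \ref{prop:l_ub}. The only places requiring mild care are: (i) invoking Proposition~\ref{prop:l_ub} at the correct iterate, namely the $\hat{J}_k$ that feeds step $k{+}1$, and confirming that its one-sided pointwise bound does control $\|T\hat{J}_k - \hat{T}\hat{J}_k\|_\infty$; and (ii) checking that all $d_k$ are finite, which follows since $B(\mathbb{S}_+^n(\gamma))$ is closed under $T$ and $\hat{T}$ (Propositions~\ref{prop1}-(i), \ref{prop2}-(i)) under Assumption~\ref{asmp2}.
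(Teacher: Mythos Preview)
Your proposal is correct and follows essentially the same route as the paper: the paper likewise inserts $T\hat{J}_k$, applies the triangle inequality, bounds the two pieces via Proposition~\ref{prop1}-(ii) and Proposition~\ref{prop:l_ub}, and compares $\|J_k-\hat{J}_k\|_\infty$ to the scalar recursion $\ell_{k+1}=\beta\ell_k+2\epsilon n^2 L$ with $\ell_0=0$ (which is exactly your $L_k\cdot 2\epsilon n^2 L$). Your alternative limsup argument is also valid and amounts to the same computation.
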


\begin{proof}
Notice that 
\begin{align}
\|J_{k+1}-\hat{J}_{k+1}\|_\infty &= \|TJ_k-\hat{T}\hat{J}_k\|_\infty \nonumber \\
&= \|TJ_k-T\hat{J}_k+T\hat{J}_k-\hat{T}\hat{J}_k\|_\infty \nonumber  \\
&\leq \|TJ_k-T\hat{J}_k\|_\infty + \|T\hat{J}_k-\hat{T}\hat{J}_k\|_\infty \nonumber  \\
&\leq \beta  \|J_k-\hat{J}_k\|_\infty + 2 \epsilon n^2 L, \label{eq:6_1}
\end{align}
where the final inequality is obtained using the fact that $T$ is a contraction mapping, as proven in Proposition~\ref{prop1}, to upper bound the first term, and that we can upper bound the second term by the result of Proposition~\ref{prop:l_ub}. Define a sequence $\ell_k$ by
\begin{equation} \label{eq:6_2}
    \ell_{k+1}=\beta \ell_k + 2\epsilon n^2 L
\end{equation}
with $\ell_0$$=$$0$. We have $\ell_k$$\leq$$\frac{2\epsilon n^2 L}{1-\beta}$$=$$\frac{2\epsilon n^2}{(1-\beta)^2}$ for $k$$=$$0,1,2,\ldots$. Comparing \eqref{eq:6_1} and \eqref{eq:6_2}, we have $\|J_k -\hat{J}_k\|_\infty$$\leq$$\ell_k$. Thus,
\[
\limsup_{k\rightarrow \infty} \|J_k-\hat{J}_k\|_\infty \leq \frac{2\epsilon n^2}{(1-\beta)^2},
\]
where we have substituted for $L$$=$$\frac{1}{1-\beta}$.
\end{proof}

\subsection{Main results}
Following from this analysis, we now present our main results, which are summarized in the following two theorems. 
\begin{theorem}\label{theo:main}
Let $J^*$ be the optimal value function characterized by \eqref{eq:bellman}. Let $\bar{J}_k$$=$$\bar{T}^k J_0$ be the sequence generated by the approximate value iteration with \eqref{eq:tbar} where $J_0$$\in$$B(\mathbb{S}_+^n(\gamma))$ is a constant function. We have
\[
\limsup_{k\rightarrow \infty} \|J^* -\bar{J}_k\|_\infty \leq \nicefrac{2\epsilon n^2}{(1-\beta)^2}.
\]
\end{theorem}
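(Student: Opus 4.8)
The plan is to sandwich $\bar{J}_k$ between $J_k$ and $\hat{J}_k$, transfer the gap bound from Proposition~\ref{prop:j_hat} to the pair $(J_k,\bar{J}_k)$, and then pass to $J^*$ via the uniform convergence of the exact value iteration. Fix the constant function $J_0\in B(\mathbb{S}_+^n(\gamma))$ and set $J_k=T^kJ_0$, $\bar{J}_k=\bar{T}^kJ_0$, $\hat{J}_k=\hat{T}^kJ_0$, so that all three sequences are generated from the same initialization and Propositions~\ref{prop:j_bar_hat} and~\ref{prop:j_hat} apply verbatim.

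First I would invoke Proposition~\ref{prop:j_bar_hat}, which gives $J_k(P)\leq \bar{J}_k(P)\leq \hat{J}_k(P)$ for every $P\succeq 0$ and every $k$. Subtracting $J_k(P)$ throughout yields $0\leq \bar{J}_k(P)-J_k(P)\leq \hat{J}_k(P)-J_k(P)$, hence the pointwise bound $|\bar{J}_k(P)-J_k(P)|\leq |\hat{J}_k(P)-J_k(P)|$ and therefore $\|\bar{J}_k-J_k\|_\infty\leq \|\hat{J}_k-J_k\|_\infty$. Taking $\limsup_{k\to\infty}$ and applying Proposition~\ref{prop:j_hat} then gives
\[
\limsup_{k\to\infty}\|\bar{J}_k-J_k\|_\infty \leq \limsup_{k\to\infty}\|\hat{J}_k-J_k\|_\infty \leq \frac{2\epsilon n^2}{(1-\beta)^2}.
\]

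Next I would combine this with the triangle inequality $\|J^*-\bar{J}_k\|_\infty \leq \|J^*-J_k\|_\infty + \|J_k-\bar{J}_k\|_\infty$. By Proposition~\ref{prop1}-(iv), $T$ is a contraction on the complete space $B(\mathbb{S}_+^n(\gamma))$, so $J_k=T^kJ_0$ converges uniformly to $J^*$, i.e.\ $\|J^*-J_k\|_\infty\to 0$. Taking $\limsup_{k\to\infty}$ of the triangle inequality, the first term vanishes and the second is bounded by the display above, giving
\[
\limsup_{k\to\infty}\|J^*-\bar{J}_k\|_\infty \leq 0 + \frac{2\epsilon n^2}{(1-\beta)^2},
\]
which is the claim. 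There is no serious obstacle here — the content is entirely in the earlier propositions; the only points requiring minor care are ensuring the same constant $J_0$ feeds all three iterations (so the sandwich of Proposition~\ref{prop:j_bar_hat} is available) and handling the $\limsup$ correctly, namely that $\limsup$ is subadditive and that a term tending to $0$ contributes nothing. If one prefers to phrase the conclusion in terms of $\bar{J}^*$, one can additionally note that $\bar{J}_k\to\bar{J}^*$ uniformly by Proposition~\ref{prop2}-(iv), so the $\limsup$ above is in fact a genuine limit equal to $\|J^*-\bar{J}^*\|_\infty$, but this refinement is not needed for the stated theorem.
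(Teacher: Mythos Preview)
Your proposal is correct and follows essentially the same approach as the paper: use the sandwich $J_k\leq\bar{J}_k\leq\hat{J}_k$ from Proposition~\ref{prop:j_bar_hat} to reduce to $\|J_k-\hat{J}_k\|_\infty$, apply Proposition~\ref{prop:j_hat}, and then combine with the triangle inequality and the uniform convergence $J_k\to J^*$ from Proposition~\ref{prop1}. Your write-up is in fact slightly more explicit than the paper's about why the sandwich yields $\|\bar{J}_k-J_k\|_\infty\leq\|\hat{J}_k-J_k\|_\infty$ and about the handling of the $\limsup$, but the argument is the same.
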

\begin{proof}
Since $0\leq$$J_k$$\leq$$\bar{J_k}$$\leq$$\hat{J}_k$, it follows from Proposition~\ref{prop:j_hat} that
$\limsup_{k\rightarrow \infty} \|J_k -\bar{J}_k\|_\infty$$\leq$$\frac{2\epsilon n^2}{(1-\beta)^2}$. Now,
\begin{align*}
& \limsup_{k\rightarrow \infty} \|J^*-\bar{J}_k\|_\infty
\leq \limsup_{k\rightarrow \infty} \|J^*-J_k+J_k-\bar{J}_k\|_\infty \\
&\qquad \quad \leq \underbrace{\limsup_{k\rightarrow \infty} \|J^*-J_k\|}_{=0 \text{ by Proposition~\ref{prop1}}}+\limsup_{k\rightarrow \infty} \|J_k-\bar{J}_k\|_\infty \\
&\qquad \quad =\limsup_{k\rightarrow \infty} \|J_k-\bar{J}_k\|_\infty \leq \nicefrac{2\epsilon n^2}{(1-\beta)^2}.
\end{align*}
\end{proof}

\begin{theorem} \label{theo:policy}
For all $P$$\succeq$$0$, $J^*(P)$$\leq$$J^{\pi'}(P)$$\leq$$\bar{J}^*(P)$.
\end{theorem}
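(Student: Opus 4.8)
The plan is to read $J^{\pi'}$ as the unique fixed point of the \emph{true-dynamics policy-evaluation operator} $T^{\pi'}$, defined on $B(\mathbb{S}_+^n(\gamma))$ by $(T^{\pi'}J)(P):=c(P,\pi'(P))+\beta J(f(P,\pi'(P)))$, and to obtain both inequalities by iterating $T^{\pi'}$ on a suitable starting function, exploiting that $T^{\pi'}$ is monotone and a $\beta$-contraction. A preliminary step is to check that $\pi'\in\Pi(\gamma)$ so that $J^{\pi'}$ is well defined: under Assumption~\ref{asmp2} the minimization defining $\pi'(P)$ runs over a nonempty family of sensor sets $S$ with $\Theta(f(P,S))\in\mathbb{M}(\gamma)\subseteq\mathbb{S}_+^n(\gamma)$, and since Lemma~\ref{lem:theta_omega} gives $f(P,\pi'(P))\preceq\Theta(f(P,\pi'(P)))$ we get $\text{Tr}(f(P,\pi'(P)))\leq\text{Tr}(\Theta(f(P,\pi'(P))))\leq\gamma$; hence $\mathbb{S}_+^n(\gamma)$ is invariant under $\pi'$, $T^{\pi'}$ maps $B(\mathbb{S}_+^n(\gamma))$ into itself, and (by the argument of Proposition~\ref{prop1}) it is monotone and a $\beta$-contraction, so $(T^{\pi'})^kV\to J^{\pi'}$ for every $V\in B(\mathbb{S}_+^n(\gamma))$.

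For the lower bound $J^*\leq J^{\pi'}$: Bellman's equation \eqref{eq:bellman} gives, for every $P$, $J^*(P)=\min_S\{c(P,S)+\beta J^*(f(P,S))\}\leq c(P,\pi'(P))+\beta J^*(f(P,\pi'(P)))=(T^{\pi'}J^*)(P)$, i.e. $J^*\leq T^{\pi'}J^*$. Applying the monotone operator $T^{\pi'}$ repeatedly yields the non-decreasing chain $J^*\leq T^{\pi'}J^*\leq (T^{\pi'})^2J^*\leq\cdots$, whose limit is $J^{\pi'}$; therefore $J^*\leq J^{\pi'}$.

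For the upper bound $J^{\pi'}\leq\bar J^*$: since $\pi'$ is greedy with respect to $\bar J^*$ in the Bellman equation \eqref{eq:bellman_j_bar}, the function $\bar J^*$ is a fixed point of the \emph{modified} policy-evaluation operator, so $\bar J^*(P)=c(P,\pi'(P))+\beta\bar J^*(\Theta(f(P,\pi'(P))))$. Comparing this with $(T^{\pi'}\bar J^*)(P)=c(P,\pi'(P))+\beta\bar J^*(f(P,\pi'(P)))$ and using $f(P,\pi'(P))\preceq\Theta(f(P,\pi'(P)))$ from Lemma~\ref{lem:theta_omega} together with monotonicity of $\bar J^*$, one gets $(T^{\pi'}\bar J^*)(P)\leq\bar J^*(P)$, i.e. $T^{\pi'}\bar J^*\leq\bar J^*$. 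Iterating the monotone $T^{\pi'}$ downward produces a non-increasing sequence $(T^{\pi'})^k\bar J^*$ converging to $J^{\pi'}$, whence $J^{\pi'}\leq\bar J^*$. Combining the two bounds gives $J^*(P)\leq J^{\pi'}(P)\leq\bar J^*(P)$ on $\mathbb{S}_+^n(\gamma)$, and the same defining recursions propagate the inequalities to all $P\succeq 0$.

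The main obstacle is the monotonicity of $\bar J^*$ invoked in the upper-bound step. Unlike $J_k$ and $\hat J_k$, for which monotonicity is established in Lemma~\ref{lem:monotone}, the iterates $\bar J_k=\bar T^kJ$ are \emph{not} obviously non-decreasing: the quantizer $\Theta$ is not monotone in the L\"owner order (the entrywise $\mathrm{round}(\cdot)$ can destroy a $\preceq$-ordering), so the induction used in Lemma~\ref{lem:monotone} does not transfer to $\bar T$. Consequently the inequality $\bar J^*(f(P,\pi'(P)))\leq\bar J^*(\Theta(f(P,\pi'(P))))$ needed above cannot be read off directly and must be handled with care --- either by proving the required comparison for $\bar J^*$ on the specific images $f(P,\pi'(P))$, or by bringing in the monotone fixed point $\hat J^*$ and the sandwich $J^*\leq\bar J^*\leq\hat J^*$ from Proposition~\ref{prop:j_bar_hat}. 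Everything else --- the operator-theoretic scaffolding and the lower bound --- is routine once Propositions~\ref{prop1}--\ref{prop2} and Lemmas~\ref{lem:theta_omega}--\ref{lem:monotone} are in hand.
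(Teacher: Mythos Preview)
Your lower-bound argument is correct and standard. In the upper-bound step you have misread the definition of $\pi'$: equation \eqref{eq:subopt_policy} sets $\pi'(P)=\argmin_S\{c(P,S)+\beta\bar J^*(f(P,S))\}$ with the \emph{unquantized} transition $f$, so $\pi'$ is \emph{not} greedy for the modified Bellman equation \eqref{eq:bellman_j_bar}, and the identity $\bar J^*(P)=c(P,\pi'(P))+\beta\bar J^*(\Theta(f(P,\pi'(P))))$ you display need not hold. The repair is easy: the actual definition of $\pi'$ gives $(T^{\pi'}\bar J^*)(P)=\min_S\{c(P,S)+\beta\bar J^*(f(P,S))\}$ directly, and \emph{if $\bar J^*$ were monotone} then Lemma~\ref{lem:theta_omega} would yield $(T^{\pi'}\bar J^*)(P)\leq\min_S\{c(P,S)+\beta\bar J^*(\Theta(f(P,S)))\}=\bar J^*(P)$, after which your downward iteration goes through. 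So, after this correction, everything does hinge on L\"owner-monotonicity of $\bar J^*$, exactly as you diagnose --- and that is a genuine gap, since $\Theta$ is not order-preserving and no monotonicity lemma for $\bar J_k$ or $\bar J^*$ is supplied. Note also that your proposed detour through $\hat J^*$ yields only $J^{\pi'}\leq\hat J^*$, not the sharper $J^{\pi'}\leq\bar J^*$, so it does not by itself close the gap.

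The paper's approach is structurally different but runs into the same obstacle. Rather than iterating $T^{\pi'}$ from the fixed points $J^*$ and $\bar J^*$, the paper launches the three sequences $J_k=T^kJ_0$, $J_k^{\pi'}=(T^{\pi'})^kJ_0$, $\bar J_k=\bar T^kJ_0$ from a common constant $J_0$ and argues the sandwich $J_k\leq J_k^{\pi'}\leq\bar J_k$ by induction on $k$, then passes to the limit. In the upper-bound induction step the paper writes $c(P,\pi'(P))+\beta\bar J_k(f(P,\pi'(P)))=\min_S\{c(P,S)+\beta\bar J_k(f(P,S))\}=\bar J_{k+1}(P)$; but the first ``$=$'' would require $\pi'$ to be greedy for $\bar J_k$ (it is greedy only for $\bar J^*$), and the second would require $\bar J_{k+1}$ to be defined via $f$ rather than $\Theta\circ f$. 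Thus the printed argument glosses over precisely the monotonicity/quantizer issue you isolate. Your operator-theoretic formulation is cleaner and pinpoints the missing ingredient more honestly; the paper's inductive framing does not avoid it.
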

\begin{proof}
Again, set $J_k$$=$$T^k J_0$ and $\bar{J}_k$$=$$\bar{T}^k J_0$ for $J_0\in B(\mathbb{S}_+^{n}(\gamma))$ a constant function. Define the sequence $J_k^{\pi'}$ by
\[
J_{k+1}^{\pi'}(P)=c(P, \pi'(P))+\beta J_k^{\pi'}(f(P, \pi'(P))).
\]
By construction, we have that 
\begin{equation}\label{eq:j_pi}
    J_k(P) \leq J_k^{\pi'}(P) \leq \bar{J}_k(P) \;\; \forall P \in \mathbb{S}_+^{n}(\gamma)
\end{equation}
for $k$$=$$0$. Now, Suppose \eqref{eq:j_pi} holds for some $k$$\geq$$0$. Then,
\begin{align*}
J_{k+1}(P)&=\min\nolimits_S \{c(P,S)+\beta  J_k(f(P,S))\} \\
&\leq c(P, \pi'(P))+\beta J_k(f(P, \pi'(P))) \\
&\leq c(P, \pi'(P))+\beta J_k^{\pi'}(f(P, \pi'(P))) =J_{k+1}^{\pi'}(P),
%
\end{align*}
where the second inequality follows from the induction hypothesis. Likewise,
\begin{align*}
J_{k+1}^{\pi'}(P)&= c(P, \pi'(P))+\beta J_k^{\pi'}(f(P, \pi'(P))) \\
&\leq c(P, \pi'(P))+\beta \bar{J}_k(f(P, \pi'(P))) \\
&=\min\nolimits_S \{c(P, S)+\beta \bar{J}_k(f(P, S)) \} =\bar{J}_{k+1}(P).
%
\end{align*}
Therefore, \eqref{eq:j_pi} holds for every $k$$=$$0, 1, \ldots$. 
Since $J_k$, $J_k^{\pi'}$, and $\bar{J}_k$ converge uniformly to $J^*$, $J^{\pi'}$ and $\bar{J}^*$, respectively, we have $J^*$$\leq$$J^{\pi'}$$\leq$$\bar{J}^*$. 
\end{proof}

Theorems \ref{theo:main} and \ref{theo:policy} imply that $\|J^*$$-$$J^{\pi'}\|_\infty$$\leq$$\frac{2\epsilon n^2}{(1-\beta)^2}$. Although the policy $\pi'$ is suboptimal, the performance loss can be made arbitrarily small by selecting a sufficiently small $\epsilon$.

\section{Numerical experiments}
\label{sec:simulation}
In this section, we use the approximate value iteration
\begin{equation} \label{eq:sim_vi}
\bar{J}_{k+1}(P)=(\bar{T}\bar{J}_k)(P) \quad \forall P \in \mathbb{M}(\gamma), \quad k=0, 1, \ldots
\end{equation}
where $\bar{T}$ is the Bellman operator defined in \eqref{eq:tbar} equipped with the quantizer $\Theta''(\cdot)$ defined in \eqref{eq:theta3}. Recalling that $\Theta''(\cdot)$ provides a tighter bound than $\Theta(\cdot)$, our theoretical bounds remain valid. 
Furthermore, note that the computation of the right hand side of \eqref{eq:sim_vi} for each $P$$\in$$\mathbb{M}(\gamma)$ is a function the previous value function iterates and $P$ itself. Since each value function update is independent of all other $P'$$ $$\in$$\mathbb{M}(\gamma)$, their computations are massively parallizable. We exploit this property by implementing the value function updates in \eqref{eq:sim_vi} on a GPU using the CUDA-enabled MATLAB interface. 
Through this implementation, each complete iteration of \eqref{eq:sim_vi} requires less than a second of computation time, even for meshes with cardinalities $|\mathbb{M}(\gamma)|$ of several million elements.

For our numerical analysis, we revisit the sensor selection problem of \cite{vitus2012efficient}, wherein a 3-dimensional process with four available sensors is considered. The system parameters are


\begin{align*}
A&\!=\!\begin{bmatrix}
-0.6 & 0.8 & 0.5 \\
-0.1 & 1.5 & -1.1 \\
1.1 & 0.4 & -0.2
\end{bmatrix}, \; 
C\!=\!\begin{bmatrix}
0.75 & -0.2 & -0.65 \\
0.35 & 0.85 & 0.35 \\
0.2 & -0.65 & 1.25 \\
0.7 & 0.5 & 0.5
\end{bmatrix},
\end{align*}
$W$$=$$I_{3\times3}$, and $V$$=$$\text{diag}(0.53,0.8,0.2,0.5)$, where $\text{diag}(\cdot)$ is a diagonal matrix whose entries are the listed values.
To penalize the choice of sensors, we adopt the set function $g(S_{t})$ defined in \eqref{eq:exactly_one}, which permits the selection of exactly one active sensor at each time step.
We additionally use a discount factor of $\beta$$=$$0.95$ and a trace bound of $\gamma$$=$$15$.

To study the impact of the resolution $\epsilon$ on the resulting value functions, we construct the set $\mathbb{M}(\gamma)$ for $\epsilon$ ranging from $\epsilon$$=$$3/7(=$$0.428)$ to $\epsilon$$=$$1$. For each $\mathbb{M}(\gamma)$, we perform the value iteration procedure in \eqref{eq:sim_vi} until convergence, which we observed to require no more than $500$ iterations. Once each value of $\bar{J}^*$ was obtained, we then extracted the corresponding sensor selection policy $\pi'$ through \eqref{eq:subopt_policy}. 

For each $\epsilon$, Fig.~\ref{fig:value_func} plots the value function $\bar{J}^*(P(\lambda))$ evaluated at $P(\lambda)$$=$$0.01\lambda I_3$, for $\lambda$$=$$1,\ldots,250$.
We additionally plot the estimated values of $J^{\pi'}(P(\lambda))$, which we obtain by simulating the trajectories $P_t$, $t$$=$$0,1\ldots$, starting with $P_0$$=$$P(\lambda)$ and repeatedly iterating \eqref{eq:state_p} using $C_{S_{t}}$ and $V_{S_{t}}$ corresponding to $\pi'(P_{t})$. Fig.~\ref{fig:value_func} confirms that $J^{\pi'}(P)$$\leq$$ \bar{J}^*(P)$ holds (Theorem~\ref{theo:policy}) for each value of $\epsilon$ considered. Although the exact value function $J^*(P)$ is not computable, Fig.~\ref{fig:value_func} indicates that a smaller $\epsilon$ tends to provide a tighter upper bound, given by $\bar{J}^*(P)$.
This relationship holds even if the gap
obtained in Theorem~\ref{theo:main} is overly conservative (in the case of $\epsilon$$=$$0.5$, for example, we have that $2\epsilon n^2/(1$$-$$\beta)^2$$=$$3600$).

\begin{figure}[t]
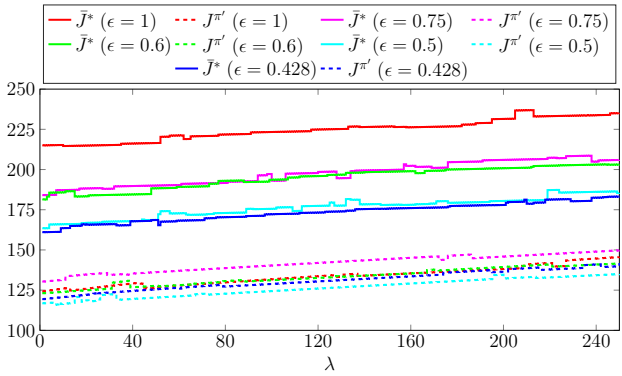

    \centering
    \includestandalone[width=0.95\columnwidth]{cost_func_comparisons}
    \caption{Value functions $\bar{J}^*(P)$ and $J^{\pi'}(P)$ for various values of $\epsilon$.
    }
    \label{fig:value_func}
\end{figure}

\begin{figure}[t]
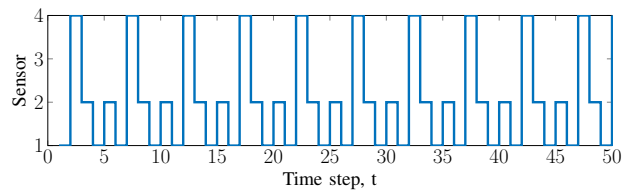

    \centering
    \includestandalone[width=0.95\columnwidth]{sensor_selection_new}
    \caption{Sequence of selected sensors when the initial state is $P_{0}=I_{3\times3}$. Sensor selection policy is computed with $\epsilon=0.428$.}
    \label{fig:periodic}
\end{figure}

We observe that the sequence of selected sensors under the synthesized sensor selection policy always eventually exhibits a periodic behavior. Using $\epsilon$$=$$0.428$, for example, the sequence of selected sensors eventually becomes a repetition of $\{4,2,1,2,1\}$ regardless of the initial covariance, as shown in Fig.~\ref{fig:periodic} for $P_{0}$$=$$I_{3\times3}$. 
We additionally observe that varying $\epsilon$ yields different periodic sequences (Table~\ref{table:2}). 
%
%
Interestingly, the sequences we obtain differ from the repeating sequence $\{4,1,4,2,1,2,3\}$ obtained in \cite{vitus2012efficient}. Although \cite{vitus2012efficient} considered an undiscounted problem, where the performance is evaluated by $\lim_{T\rightarrow \infty}\frac{1}{T}\sum_{t=1}^T \text{Tr}(P_t)$, it is noteworthy that some of the sequences we obtain outperform that of \cite{vitus2012efficient}, even under an undiscounted setting, as shown in Table~\ref{table:2}. This cost was also assessed for a greedy approach, i.e., beginning at the initial time and selecting the sensor at each time-step that minimizes the cost to that point. This steady-state sequence matches that of the proposed for $\epsilon$$=$$0.6,1$ but differs in the transient phase. Unlike greedy, our solution quality can be adapted via modulating parameter $\epsilon$. Finally, the cost from exclusively using the sensor minimizing the steady-state covariance \cite{zhang2017sensor} was also included, which demonstrates the potential unreliability in applying steady-state-only solutions to our particular problem.
Our simulations indicate that the sensor selection policy obtained by the proposed method performs well in practice, although the suboptimality guarantee provided by Theorem~\ref{theo:main} is conservative.

\begin{table}[t]
\centering
\begin{tabular}{|l|l|l|}
\hline
Method  & Sequence     & Undiscounted cost   \\ \hline
$\epsilon=0.5$ & $\{4,2,1\}$    & $6.4237$ \\ \hline
$\epsilon=0.428$ & $\{4,2,1,2,1\}$    & $6.6944$ \\ \hline
$\epsilon=0.6, \epsilon=1$ & $\{2,2,1\}$    & $6.8380$ \\ \hline
$\epsilon=0.75$& $\{2,2,2,1\}$    & $7.3535$ \\ \hline \hline
\cite{vitus2012efficient} & $\{4,1,4,2,1,2,3\}$    & $6.9410$ \\ \hline
\hline
greedy solution & \{2, 2, 1\} & 6.8380 \\ \hline \hline
\cite{zhang2017sensor} steady-state solution & \{2\} & 25.5572 \\ \hline
\end{tabular}
\caption{Repeating sequence and corresponding undiscounted cost.}
\label{table:2}
\end{table}

\section{Conclusion and Future Work}

We considered the problem of synthesizing an optimal sensor selection policy for linear-Gaussian systems, expressing it in terms of a value iteration over the continuous space of covariance matrices. We formulated a point-based value iteration approach over a prespecified, finite set of covariance matrices to obtain a computationally tractable procedure. This method is accompanied by proven bounds for suboptimality of the obtained sensor selection policy in terms of a configurable solution parameter. Favorable performance was demonstrated with respect to existing solutions.

Future research should seek to develop methods that efficiently construct $\mathbb{M}(\gamma)$, as na\"ively using a ``mesh grid" is impractical for systems with high-dimensional state spaces.
Similarly, as discussed in \cite{vitus2012efficient}, the sensor selection problem can be viewed as the dual to the optimal control problem in switched systems. Investigating how existing pruning methods in this context, such as that of \cite{andrien2020near}, can be applied to our sensor selection problem of interest, is a second straightforward research direction.
Finally, future work should additionally seek to obtain tighter bounds on the suboptimality of the synthesized sensor selection policy.

\bibliographystyle{IEEEtran}
\bibliography{refs2}

\appendix

\subsection{Supplementary results}
\label{sec:a1}

\begin{lemma}
\label{lem:matrixinversion}
For $M\succ 0$, $N \succ 0$, $X\succeq 0$ and $A\in \mathbb{R}^{n\times n}$, the following inequality holds:
\begin{align*}
&((M+AXA^\top)^{-1}+N)^{-1} \\
&\preceq (M^{-1}+N)^{-1} \\
&\qquad +(M^{-1}+N)^{-1}M^{-1}AXA^\top M^{-1}(M^{-1}+N)^{-1}.
\end{align*}
Moreover, if $A$ is Schur stable, we have
\[
\text{Tr}((M^{-1}+N)^{-1}M^{-1}AXA^\top M^{-1}(M^{-1}+N)^{-1}) \leq \text{Tr}(X).
\]

\end{lemma}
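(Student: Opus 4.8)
\emph{Proof proposal.}
The two assertions should be handled separately: the first is a purely algebraic matrix inequality, while the second is where the Schur stability of $A$ must be invoked and where I expect the real work to lie.

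\textbf{First inequality.} Conceptually this is the statement that the operator-concave map $g(P):=(P^{-1}+N)^{-1}$ (a parallel sum) lies below its tangent map at $M$, and a chain-rule computation identifies that tangent map, evaluated at $AXA^\top$, as exactly the stated correction term. To keep the argument self-contained I would instead proceed by the Woodbury identity. Put $U:=(M^{-1}+N)^{-1}$, $Y:=AXA^\top\succeq 0$, and $B:=M^{-1}Y^{1/2}$. Applying Woodbury to $M+Y^{1/2}IY^{1/2}$ gives $(M+Y)^{-1}=M^{-1}-BGB^\top$ with $G:=(I+Y^{1/2}M^{-1}Y^{1/2})^{-1}\succ 0$, hence $(M+Y)^{-1}+N=U^{-1}-BGB^\top$, and a second application yields
\[
\big((M+Y)^{-1}+N\big)^{-1}=U+UB\,(G^{-1}-B^\top U B)^{-1}\,B^\top U .
\]
It then suffices to show $G^{-1}-B^\top U B\succeq I$, for then $(G^{-1}-B^\top U B)^{-1}\preceq I$ and consequently $UB\,(G^{-1}-B^\top U B)^{-1}\,B^\top U\preceq UBB^\top U$. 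Since $G^{-1}-I=Y^{1/2}M^{-1}Y^{1/2}$ and $B^\top U B=Y^{1/2}M^{-1}UM^{-1}Y^{1/2}$, the inequality $G^{-1}-B^\top UB\succeq I$ reduces to $M^{-1}(M-U)M^{-1}\succeq 0$, which holds because $M\succeq(M^{-1}+N)^{-1}=U$ (antitonicity of the matrix inverse, using $N\succeq 0$). Finally $UBB^\top U=UM^{-1}AXA^\top M^{-1}U$ is precisely the claimed correction term, which completes this part.

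\textbf{Trace bound.} Write $Z:=(M^{-1}+N)^{-1}M^{-1}A=(I+MN)^{-1}A$, so that the correction term is $ZXZ^\top$ and, by cyclicity of the trace, $\text{Tr}(ZXZ^\top)=\text{Tr}(XZ^\top Z)$; it therefore suffices to prove $Z^\top Z\preceq I$, equivalently $\|(I+MN)^{-1}A\|\le 1$. This is the step I expect to be the main obstacle. A crude splitting $\|(I+MN)^{-1}A\|\le\|(I+MN)^{-1}\|\,\|A\|$ is far too lossy, since Schur stability bounds only the spectral radius of $A$ and not its norm, so one must exploit the algebraic structure of $M$ and $N$. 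One plausible route: when $M$ factors as $AQA^\top$ with $Q\succ 0$ (as it does where this lemma is applied, $M=APA^\top+W=A(P+A^{-1}WA^{-\top})A^\top$), a similarity transformation turns the target into $A^\top A\preceq(I+\tilde NQ)(I+Q\tilde N)$ with $\tilde N:=A^\top NA\succeq 0$, which one would then close using a Stein/Lyapunov inequality for the Schur-stable $A$ together with the strict positivity of $Q$ and $N$. Verifying that the Schur stability of $A$ is enough to dominate $A^\top A$ in this manner is the technical heart of the proof.
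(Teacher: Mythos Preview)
Your treatment of the first inequality is essentially identical to the paper's: two applications of the Woodbury identity followed by the observation that $M\succeq(M^{-1}+N)^{-1}$, which forces the inner inverse to be $\preceq I$. The only cosmetic difference is that the paper factors $AXA^\top=F^\top F$ with $F:=X^{1/2}A^\top$, whereas you take the symmetric square root $Y^{1/2}$ of $Y=AXA^\top$; the algebra is otherwise the same.

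For the trace bound your route diverges from the paper's, and your instinct that this is where the difficulty lies is well founded. The paper does \emph{not} try to bound $\|(I+MN)^{-1}A\|$ in one shot. Instead it splits the estimate: setting $Z:=M^{1/2}(I+M^{1/2}NM^{1/2})^{-1}M^{-1/2}$, it argues that $M^{-1}(M^{-1}+N)^{-2}M^{-1}=Z^\top Z\prec I$ because the eigenvalues of $Z$ lie in $(0,1)$; this strips off the $(M^{-1}+N)^{-1}M^{-1}$ factors and reduces the claim to $\text{Tr}(AXA^\top)\le\text{Tr}(X)$, which is then attributed to the Schur stability of $A$.

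Your hesitation is warranted, because neither step in that splitting is valid under the stated hypotheses. The inference ``$\lambda_i(Z)\in(0,1)\Rightarrow Z^\top Z\prec I$'' fails since $Z$ is not symmetric (similarity preserves eigenvalues, not singular values); for instance $M=\text{diag}(1,100)$, $N=\bigl(\begin{smallmatrix}1&0.9\\0.9&1\end{smallmatrix}\bigr)$ yield $\|Z\|>1$. And $\text{Tr}(AXA^\top)\le\text{Tr}(X)$ for all $X\succeq 0$ is equivalent to $A^\top A\preceq I$, i.e.\ $\|A\|\le 1$, which is strictly stronger than $\rho(A)<1$; the nilpotent $A=\bigl(\begin{smallmatrix}0&2\\0&0\end{smallmatrix}\bigr)$ with $X=\bigl(\begin{smallmatrix}0&0\\0&1\end{smallmatrix}\bigr)$ gives $\text{Tr}(AXA^\top)=4>1$. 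Taking $M=I$, $N=\delta I$ with $\delta>0$ small then shows that the full trace inequality in the lemma itself fails for this $A$. So the obstacle you isolate is real and is not resolved by the paper's argument: the second assertion genuinely needs the contraction hypothesis $\|A\|\le 1$ rather than mere Schur stability, and your attempt to borrow extra structure on $M$ from the application is chasing a hypothesis the lemma does not provide.
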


\begin{proof}
First, recall the general matrix inversion lemma:
\begin{equation}\label{eq:gen_mat_inv_lem}
    (A + BCB^{\top})^{-1} = A^{-1} - A^{-1}B(B^{\top}A^{-1}B + C^{-1})^{-1}B^{\top}A^{-1}
\end{equation}
Now, set $F:=X^{\frac{1}{2}}A^\top$. Then one can show that
\begin{align}
&((M+AXA^\top)^{-1}+N)^{-1} \label{lemma3-1} \\
&=((M+F^\top F)^{-1}+N)^{-1} \label{lemma3-2} \\
&=(M^{-1}+N-M^{-1}F^\top (FM^{-1}F^\top+I)^{-1}FM^{-1})^{-1} \label{lemma3-3} \\
&=(M^{-1}+N)^{-1} \nonumber \\
&\quad+(M^{-1}+N)^{-1}M^{-1}F^\top \nonumber \\
& \qquad \times (I+FM^{-1}(M-(M^{-1}+N)^{-1})M^{-1}F^\top)^{-1} \nonumber \\
& \qquad \times FM^{-1}(M^{-1}+N)^{-1} \label{lemma3-4} \\
&\preceq  (M^{-1}+N)^{-1} \nonumber \\
& \quad +(M^{-1}+N)^{-1}M^{-1}F^\top FM^{-1}(M^{-1}+N)^{-1}  \label{lemma3-5} \\
&= (M^{-1}+N)^{-1} \nonumber \\
& \quad +(M^{-1}+N)^{-1}M^{-1}AXA^\top M^{-1}(M^{-1}+N)^{-1} \label{lemma3-6}
\end{align}
where \eqref{lemma3-2} follows \eqref{lemma3-1} by simply substituting our relation for $F$. We then obtain \eqref{lemma3-3} by applying the general matrix inversion lemma in \eqref{eq:gen_mat_inv_lem} to \eqref{lemma3-2} wherein we use $A=M$, $B=F$, and $C=I$ (where $I$ is the identity matrix). To obtain \eqref{lemma3-4}, we again apply the general form of the matrix inversion lemma, this time setting $A=(M^{-1}+N)$, $B=FM^{-1}$, and $C=-(FM^{-1}F^{\top} + I)^{-1}$, and subsequently rearranging terms. The following inequality in \eqref{lemma3-5} is obtained by noting that $M-(M^{-1}+N)^{-1}\succeq 0$, which implies $(I+FM^{-1}(M-(M^{-1}+N)^{-1})M^{-1}F^\top)^{-1} \preceq I^{-1} = I$. Substituting $I$ yields the result. Finally, we obtain \eqref{lemma3-6} by substituting back in our relation for $F$.
To show the second claim, first notice that all the eigenvalues of the matrix $Z:=M^{\frac{1}{2}}(I+M^{\frac{1}{2}}NM^{\frac{1}{2}})^{-1}M^{-\frac{1}{2}}$ satisfy $0< \lambda_i(Z)<1$. Thus,
\begin{align*}
0&\prec M^{-1}(M^{-1}+N)^{-2}M^{-1} \\
&=M^{-\frac{1}{2}}(I+M^{\frac{1}{2}}NM^{\frac{1}{2}})^{-1}M^{\frac{1}{2}}M^{\frac{1}{2}}(I+M^{\frac{1}{2}}NM^{\frac{1}{2}})^{-1}M^{-\frac{1}{2}} \\
&=Z^\top Z \prec I.
\end{align*}
Therefore,
\begin{align*}
&\text{Tr}((M^{-1}+N)^{-1}M^{-1}AXA^\top M^{-1}(M^{-1}+N)^{-1}) \\
&\leq \text{Tr}(M^{-1}(M^{-1}+N)^{-2}M^{-1}AXA^\top) \\
&\leq \text{Tr}(AXA^\top) \\
&\leq \text{Tr}(X),
\end{align*}
completing the proof.
\end{proof}

\begin{lemma}
\label{lem:f_ub}
Assume $A\in\mathbb{R}^{n \times n}$ is Schur stable.
For every $P\succeq 0$, $\Delta P \succeq 0$ and $S$, we have
$f(P+\Delta P)\preceq f(P,S)+\Delta f$ where $\Delta f \succeq 0$ and $\text{Tr}(\Delta f)\leq \text{Tr}(\Delta P)$.
\end{lemma}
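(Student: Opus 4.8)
The plan is to derive Lemma~\ref{lem:f_ub} directly from Lemma~\ref{lem:matrixinversion} by choosing the substitution so that the matrix $A$ appearing in the two roles of Lemma~\ref{lem:matrixinversion} (the factor in $AXA^\top$ and the Schur-stable matrix) is exactly the system matrix $A$. Recall from \eqref{eq:state_p} that $f(P,S)=\big((APA^\top+W)^{-1}+C_S^\top V_S^{-1}C_S\big)^{-1}$. First I would set $M:=APA^\top+W$, which is positive definite because $W\succ0$ and $APA^\top\succeq0$, and $N:=C_S^\top V_S^{-1}C_S$. Since $A(P+\Delta P)A^\top+W=M+A(\Delta P)A^\top$, we may write $f(P+\Delta P,S)=\big((M+A\,\Delta P\,A^\top)^{-1}+N\big)^{-1}$, which is precisely the left-hand side of the inequality in Lemma~\ref{lem:matrixinversion} with $X:=\Delta P$.

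Applying Lemma~\ref{lem:matrixinversion} then gives
\[
f(P+\Delta P,S)\preceq (M^{-1}+N)^{-1}+(M^{-1}+N)^{-1}M^{-1}A\,\Delta P\,A^\top M^{-1}(M^{-1}+N)^{-1}.
\]
The key observation is that $(M^{-1}+N)^{-1}=\big((APA^\top+W)^{-1}+C_S^\top V_S^{-1}C_S\big)^{-1}=f(P,S)$, so defining
\[
\Delta f:=(M^{-1}+N)^{-1}M^{-1}A\,\Delta P\,A^\top M^{-1}(M^{-1}+N)^{-1}
\]
already delivers $f(P+\Delta P,S)\preceq f(P,S)+\Delta f$. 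Because $(M^{-1}+N)^{-1}$ and $M^{-1}$ are symmetric, $\Delta f=G\,\Delta P\,G^\top$ with $G:=(M^{-1}+N)^{-1}M^{-1}A$, hence $\Delta f\succeq0$. Finally, since $A$ is Schur stable, the ``moreover'' clause of Lemma~\ref{lem:matrixinversion} (with $X=\Delta P$) yields exactly $\text{Tr}(\Delta f)\leq\text{Tr}(\Delta P)$, which is the remaining claim.

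In this sense all of the real work is already contained in Lemma~\ref{lem:matrixinversion}, and the only point needing attention is its hypothesis $N\succ0$: the matrix $C_S^\top V_S^{-1}C_S$ is in general only positive semidefinite. When $C_S$ has full column rank it is positive definite and nothing more is needed; otherwise I would run the argument with $N$ replaced by $N+\delta I$ and let $\delta\downarrow0$, using that $f(P+\Delta P,S)$ and the explicit $\Delta f$ above vary continuously with $\delta$, or simply note that the derivation of Lemma~\ref{lem:matrixinversion} carries over to $N\succeq0$ as long as $M\succ0$. I expect this positive-definiteness bookkeeping to be the only mild obstacle; the rest is a mechanical substitution.
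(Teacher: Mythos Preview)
Your proposal is correct and follows exactly the paper's own argument: the paper's proof consists of the single sentence ``This result follows from Lemma~\ref{lem:matrixinversion} by setting $M=APA^\top+W$, $N=C_S^\top V_S^{-1}C_S$, $X=\Delta P$ and $\Delta f=(M^{-1}+N)^{-1}M^{-1}AXA^\top M^{-1}(M^{-1}+N)^{-1}$.'' Your additional remark about the hypothesis $N\succ 0$ versus $N\succeq 0$ is a valid observation that the paper does not address; as you note, the derivation of Lemma~\ref{lem:matrixinversion} only uses invertibility of $M^{-1}+N$ and the relation $M-(M^{-1}+N)^{-1}\succeq 0$, both of which hold for $N\succeq 0$ and $M\succ 0$, so the limiting argument is unnecessary but harmless.
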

\begin{proof}
This result follows from Lemma~\ref{lem:matrixinversion} by setting $M=APA^\top+W$, $N=C_S^\top V_S^{-1}C_S$, $X=\Delta P$ and $\Delta f=(M^{-1}+N)^{-1}M^{-1}AXA^\top M^{-1}(M^{-1}+N)^{-1}$.
\end{proof}

\end{document}